\documentclass[12pt]{article}
\usepackage{amssymb,amsfonts,latexsym,amsmath,theorem,mathrsfs,graphicx,xcolor}
\textheight 22.5cm\topmargin -0.4 in
\textwidth 16.8cm\oddsidemargin 0in\evensidemargin 0in

%\usepackage{tikz-cd}
%\usepackage{tikz-3dplot}
%\usetikzlibrary{calc}
%\usepackage[T1]{fontenc}
\usepackage[square,sort&compress,numbers]{natbib}
\usepackage{subcaption}
\usepackage{hyperref}
\usepackage{float, braket, mathrsfs}
\usepackage{bbold}
\usepackage{bm}
\usepackage{tikz}
\usepackage{tikz-3dplot}
\usepackage{tikz-cd}
\usepackage[utf8]{inputenc}
\usepackage[english]{babel}
\usepackage{imakeidx}
%\usepackage{hyperref}

%
% Martin's new commands
%
\DeclareMathOperator{\grad}{grad}
\DeclareMathOperator{\Aut}{Aut}

\DeclareMathOperator{\Tr}{Tr}

\DeclareMathOperator{\diag}{diag}
\DeclareMathOperator{\musicd}{\flat}

\newcommand{\Sym}{{\rm Sym}}

\newcommand{\OO}{\mathscr{O}} 
\newcommand{\MM}{\mathscr{M}}

\newcommand{\evec}{{\bf e}}

\newcommand{\xvec}{{\bf x}} 
\newcommand{\vvec}{{\bf v}}

\newcommand{\g}{\mathfrak{g}}

\newcommand{\R}{{\mathbb{R}}}

\newcommand{\Z}{{\mathbb{Z}}}

\newcommand{\I}{{\mathbb{I}}}
\newcommand{\T}{{\mathbb{T}}}

\newcommand{\beq}{\begin{equation}}
\newcommand{\eeq}{\end{equation}}
\newcommand{\bea}{\begin{eqnarray}}
\newcommand{\eea}{\end{eqnarray}}
\newcommand{\ben}{\begin{eqnarray*}}
\newcommand{\een}{\end{eqnarray*}}
\newcommand{\bem}{\begin{enumerate}}
\newcommand{\eem}{\end{enumerate}}

\newcommand{\ra}{\rightarrow}

\newcommand{\cd}{\partial}
\newcommand{\wt}{\widetilde}

\newcommand{\less}{\backslash}

\newcommand{\su}{{\mathfrak{su}}}
\newcommand{\SU}{{\mathrm{SU}}}

\newcommand{\SO}{{\mathrm{SO}}}

\newcommand{\SPD}{{\mathrm{SPD}}}

\def \d{\mathrm{d}}

\newcommand{\ip}[1]{\langle #1 \rangle}
\newcommand{\ignore}[1]{}

\newcommand{\vol}{{\rm vol}}

\newcommand{\lamvec}{{\mbox{\boldmath{$\lambda$}}}}

\newcommand{\nvec}{\mbox{\boldmath{$n$}}}

\newcommand{\Xvec}{\mbox{\boldmath{$X$}}}
\newcommand{\Lvec}{\mbox{\boldmath{$L$}}}

\newcommand{\eps}{\varepsilon}
\renewcommand{\phi}{\varphi}

\theoremstyle{plain}
\newtheorem{thm}{Theorem}

\newtheorem{prop}[thm]{Proposition}

\newtheorem{cor}[thm]{Corollary}

{\theorembodyfont{\rmfamily}
\newtheorem{defn}[thm]{Definition}

}
%\theoremheaderfont{\scshape}

%\renewcommand{\theequation}{\arabic{section}.\arabic{equation}}
%\newcommand{s}{\setcounter{subsection}{2}}
%\newcommand{\newp}{\setcounter{thm}{16}}
\newenvironment{proof}{\noindent{\it Proof:\, }}{\hfill$\Box$\vspace*{0.5cm}
}

\begin{document}

\title{Skyrme crystals with massive pions}

\author{ Derek Harland\footnote{d.g.harland@leeds.ac.uk},
Paul Leask\footnote{mmpnl@leeds.ac.uk  (corresponding  author)},\: and Martin Speight\footnote{j.m.speight@leeds.ac.uk} \\
School of Mathematics, University of Leeds, Leeds LS2 9JT, UK}
\date{16th May 2023}

\maketitle

\begin{abstract}
The crystalline structure of nuclear matter is investigated in the standard Skyrme model with massive pions.
A semi-analytic method is developed to determine local minima of the static energy functional with respect to variations of both the field and the period lattice of the crystal.
Four distinct Skyrme crystals are found. Two of these were already known -- the cubic lattice of half-skyrmions and the $\alpha$-particle crystal -- but two are new. These new solutions have lower energy per baryon number and less symmetry, being periodic with respect to trigonal but {\em not} cubic period lattices.
Minimal energy crystals are also constructed under the constraint of constant baryon density, and its shown that the two new non-cubic crystals tend to chain and multi-wall solutions at low densities.
\end{abstract}

%%%%%%%%%%%%%%%%%%%%%%%%%%%%%%%%%%%%%%%%%%%%%%%%%
% INTRODUCTION
%%%%%%%%%%%%%%%%%%%%%%%%%%%%%%%%%%%%%%%%%%%%%%%%%

\section{Introduction}

The Skyrme model \cite{Skyrme_1961} is a nonlinear field theory of $\pi$ mesons that accommodates nucleons as topological solitons. It emerges as a low energy effective theory of QCD in the regime where the number of colours (or equivalently, the rank of the gauge group) grows large \cite{Witten_1983_1,Witten_1983_2}.
The model has only one field, taking values in the Lie group $\SU(2)$. Field configurations are classified topologically by an integer-valued homotopy
invariant $B$ which is interpreted physically as the baryon number of the configuration. There is a topological lower bound on static energy of the form
$E\geq E_\textrm{top}B$, where $E_\textrm{top}$ is some positive constant, originally due to
Faddeev \cite{fad-bound} and subsequently improved by one of us \cite{har1}. (Improved in this context means that the constant $E_\textrm{top}$ is increased.) Let $E(B)$ denote the minimum static energy among all fields of baryon number $B$. The
energy bound $E(B)=E_\textrm{top}B$ is never attained, but numerical studies suggest that the ratio
$E(B)/B$ decreases monotonically, and hence converges to some limit $E_*$ as $B\ra\infty$. This suggests that, as $B$ grows large, minimal energy Skyrme fields may tend to some regular, spatially periodic crystalline structure, with baryon number $B$ and energy $E_*B$ per unit cell.

Such a crystal structure was first proposed by Klebanov \cite{Klebanov_1985}.
He found a crystal of $B=1$ skyrmions arranged in a simple cubic (SC) lattice so that every unit skyrmion is internally oriented to be in the attractive channel with respect to its nearest neighbours.
Manton snd Goldhaber \cite{Goldhaber_1987} later found that at high densities this crystal undergoes a phase transition to a body centred cubic (BCC) lattice of half-skyrmions with a lower energy per baryon ($E/B$) than Klebanov's SC crystal.
Then, independently, Kugler and Shtrikman \cite{Kugler_1988} and Castillejo \textit{et al.} \cite{Castillejo_1989} determined a new  solution with 
lower $E/B$, wherein skyrmions are initially arranged in a face centred cubic (FCC) lattice and relax to a SC lattice of half-skyrmions.
In all of these studies, the model has massless pions, $m_\pi=0$, the unit cell has baryon number $B=4$, and the energy functional has been varied only over
cubic period lattices, that is, only the side length of the cube is varied.

The phase structure of the massless pion Skyrme model has been studied by Jackson and Verbaarschot \cite{Jackson_1988}. Perapechka and Shnir \cite{Shnir_2017} investigated phase transitions in the Skyrme model with $m_\pi>0$. (They also considered the effect of incorporating a non-standard sextic term in the energy functional. Such terms are of current interest because they arise in so-called ``near BPS" variants of the model \cite{adasanwer2},
but will not be of central relevance to our considerations.)
Two candidates have been previously proposed as the minimal $E/B$ crystal with massive pions: the cubic lattice of half-skyrmions \cite{Kugler_1988,Castillejo_1989} and the $\alpha$-particle lattice \cite{Feist_2012}. Once again, these studies impose periodicity with respect to a cubic period lattice, and vary only the side length of the cube.

In this paper, we study Skyrme crystals in the model with $m_\pi>0$, minimizing the energy with respect to variations of \textit{both} the Skyrme field $\phi:\R^3/\Lambda\ra \SU(2)$ \textit{and} its period lattice $\Lambda$. To achieve this, we identify every 3-torus $\R^3/\Lambda$ with the fixed
3-torus $\T^3=S^1\times S^1\times S^1=\R^3/\Z^3$ by means of the obvious
diffeomorphism $f:\T^3\ra \R^3/\Lambda$, and equip $\T^3$ with the pullback of the Euclidean metric on $\R^3/\Lambda$, $g=f^*g_\textrm{Euc}$. Varying over all
period lattices $\Lambda$ is then equivalent to varying over all flat metrics
$g$ on the fixed torus $\T^3$. This approach was introduced in
\cite{Speight_2014}, in which the interpretation of the gradient of the energy with respect to the metric $g$ as the stress tensor of the field
was repeatedly exploited. In the current paper, we will find it convenient to think of the metric variational problem more concretely, by identifying $g$ with the constant
symmetric positive definite matrix $(g_{ij})$ representing it with respect to the canonical coordinate system on $\T^3$. 

So, the numerical task we set ourselves is, for fixed topological degree $B$, to minimize $E(\phi,g)$ among all degree $B$ maps $\phi:\T^3\ra \SU(2)$ and flat metrics $g$ on $\T^3$. It is known that, for fixed $g$, the function
$\phi\mapsto E(\phi,g)$ attains a minimum (in a function space of rather low
regularity) \cite{auckap}. The complementary problem of minimising in $g$ for fixed $\phi$ was first studied in \cite{Speight_2014} and numerically implemented in \cite{Leask_2022, babspewin}.
In \cite{Speight_2014} it was shown that in a two-dimensional toy model (the baby Skyrme model), any critical metric $g$ is automatically a local minimimum of $E$.  The problem of extending this result to the Skyrme model was discussed, but unfortunately the proof used in two dimensions did not generalise.  Existence of critical metrics was not addressed in \cite{Speight_2014}.

In the present paper we obtain a much stronger result.  We show in Corollary \ref{cor4} that, for fixed $\phi$ satisfying very mild assumptions, there is a unique
flat metric with respect to which $E(\phi,g)$ is minimal, and hence a unique period lattice $\Lambda$ (up to automorphism) with respect to which
$\phi$ has minimal energy per unit cell. In the special case $m_\pi=0$, we can even write down this metric explicitly. In the
(more interesting) massive case, we can resort to a gradient based
numerical minimization scheme to find $g$. Applying a similar scheme to minimize over $\phi:\T^3\ra\SU(2)$ in tandem, we can find the energetically optimal field and period lattice for a given $B$, without ever imposing any symmetry assumptions on the lattice. 

%We begin by considering the complementary problem, proving that, for any fixed map $\phi:\T^3\ra \SU(2)$ (satisfying a very mild
%nondegeneracy condition), the function $g\mapsto E(\phi,g)$ attains a unique global minimum, and has no other critical points. That is, given any fixed Skyrme field $\phi:\T^3\ra \SU(2)$, of any degree $B\neq 0$, there is a unique
%flat metric with respect to which its Skyrme energy is minimal, and hence a unique period lattice $\Lambda$ (up to automorphism) with respect to which
%$\phi$ has minimal energy per unit cell. In the special case $m_\pi=0$, we can even write down this metric explicitly. In the
%(more interesting) massive case, we can resort to a gradient based
%numerical minimization scheme to find $g$. Applying a similar scheme to minimize over $\phi:\T^3\ra\SU(2)$ in tandem, we can find the energetically optimal field and period lattice for a given $B$, without ever imposing any symmetry assumptions on the lattice. 

The results reveal that, for $m_\pi>0$, the energetically optimal lattice (with $B=4$ per unit cell) does {\em not} have cubic symmetry. In fact there are two
crystal solutions with trigonal period lattices (orthorhombic with
side lengths $L_1=L_2\neq L_3$) which have lower energy than the lowest
strictly cubic lattice. This fact persists if, instead of minimizing over all flat $g$, we minimize only over the subset of metrics with fixed total volume. This is equivalent to minimizing under the constraint of fixed average baryon density, a problem of phenomenological interest \cite{Adam_2022}. As might be expected, the energy difference between the trigonal and cubic lattices becomes negligible
as baryon density grows very large, but is significant at lower densities.

The rest of the paper is structured as follows. In section~\ref{sec: Skyrme model}, we formulate the model mathematically, considering in detail how its energy functional depends on the metric on physical space.
In section~\ref{sec:eumm}, we prove existence and uniqueness of an energy minimizing metric $g$ for any given fixed field. In section~\ref{sec:anf} we describe our numerical scheme in detail, while section~\ref{sec: Skyrme crystal solutions} presents the results of this scheme. In section~\ref{sec: Fixed baryon density variations} we determine minimal energy crystals under the constraint of fixed baryon density. Finally, section~\ref{sec:conc} presents some concluding remarks.

%%%%%%%%%%%%%%%%%%%%%%%%%%%%%%%%%%%%%%%%%%%%%%%%%
% SKYRME MODEL
%%%%%%%%%%%%%%%%%%%%%%%%%%%%%%%%%%%%%%%%%%%%%%%%%

\section{The Skyrme model}
\label{sec: Skyrme model}

We wish to study the Skyrme model under the assumption that the Skyrme field
$\phi:\R^3\ra\SU(2)$ is periodic with respect to some $3$-dimensional lattice
\beq
\Lambda=\{n_1\Xvec_1+n_2\Xvec_2+n_3\Xvec_3:\nvec\in\Z^3\},
\eeq
 that is,
$\phi(\xvec+\Xvec)=\phi(\xvec)$ for all $\xvec\in\R^3$ and $\Xvec\in\Lambda$, where $\Xvec_1,\Xvec_2,\Xvec_3$ is an oriented basis for $\R^3$,. In this case, we may equally well interpret the field as a map $\phi:\R^3/\Lambda\ra\SU(2)$, where the torus $\R^3/\Lambda$ inherits a metric $g_\Lambda$ from the Euclidean metric on $\R^3$. It is convenient to identify $\R^3/\Lambda$ with
the standard torus $\T^3=\R^3/\Z^3$ via the diffeomorphism
\beq
f_\Lambda:\T^3\ra \R^3/\Lambda,\qquad (x_1,x_2,x_3)\mapsto x_1\Xvec_1+x_2\Xvec_2
+x_3\Xvec_3,
\eeq
and denote by $g$ the pullback of the metric $g_\Lambda$ by $f_\Lambda$. Explicitly,
\beq\label{sg1}
g=g_{ij}\d x_i \d x_j,\qquad g_{ij}=\Xvec_i\cdot\Xvec_j.
\eeq
Note that the matrix $(g_{ij})$ is a symmetric, positive definite real $3\times 3$ matrix. We denote the set of such matrices $\SPD_3$ and note that every such matrix arises as the metric on $\T^3$ corresponding to some lattice $\Lambda$ and lattices producing the same matrix are related by an oriented isometry of $\R^3$. Hence, instead of considering the Skyrme model on $\R^3/\Lambda$ for all lattices $\Lambda$, we may restrict to the standard torus $\T^3$ but consider all flat metrics on $\T^3$, $g=g_{ij}\d x_i\d x_j$, $(g_{ij})\in \SPD_3$. 
It is convenient to abuse notation slightly and denote the matrix $(g_{ij})$ by $g$.   

The energy of a Skyrme field $\phi:(\T^3,g)\ra \SU(2)$ is
defined in stages as follows. Denote by $\mu$ the left Maurer-Cartan form on $\SU(2)$, that is, the $\su(2)$-valued one-form that maps a tangent vector $X\in T_U\SU(2)$ to the vector $\mu(X)\in T_{\I_2}\SU(2)$ whose image under left translation by $U$ is $X$. The pullback of $\mu$ by $\phi$,
\beq
\phi^*\mu=\phi^{-1}\d\phi=: L_i\d x_i,
\eeq
is usually called the {\em left current}. We equip $\su(2)$ with the usual $Ad$ invariant inner product $(X,Y)_{\su(2)}=-\frac12\Tr(XY)$ and define the Dirichlet energy of $\phi$ to be
\beq
E_2(\phi,g)=\int_{\T^3}|\phi^*\mu|_g^2\vol_g
=-\frac12\int_{\T^3}g^{ij}\Tr(L_iL_j)\sqrt{|g|}\vol_0
\eeq
where $g^{ij}$ are the components of the inverse matrix $g^{-1}$, 
$|g|=\det g$ and $\vol_0=\d x_1\wedge \d x_2\wedge \d x_3$ is the canonical volume form on $\T^3$. We are following the standard convention that a numerical subscript on an energy functional denotes the degree of its integrand considered as a polynomial in spatial derivatives. It is important to note that we regard $E_2$ as a function of both $\phi$ and $g$. 

To define the Skyrme term $E_4(\phi,g)$, we introduce the $\su(2)$-valued two-form
$\Omega$ on $\SU(2)$ by $\Omega(X,Y)=[\mu(X),\mu(Y)]$. Then
\beq
E_4(\phi,g)=\frac14\int_{\T^3}|\phi^*\Omega|_g^2\vol_g
=-\frac{1}{16}\int_{\T^3}g^{ik}g^{jl}\Tr([L_i,L_j][L_k,L_l])\sqrt{|g|}\vol_0.
\eeq
We will also include a potential term
\beq
E_0(\phi,g)=\int_{\T^3}V(\phi)\vol_g=\int_{\T^3}V(\phi)\sqrt{|g|}\vol_0
\eeq
where $V:\SU(2)\ra [0,\infty)$ is some smooth function. The usual choice is
\beq
V(U)=m_\pi^2\Tr(\I_2-U),
\eeq 
which has the effect of giving the pions of the theory (small amplitude waves about the vacuum $\phi=\I_2$) mass $m_\pi$. 

In summary, the Skyrme energy of a field $\phi:\T^3\ra SU(2)$ and metric
$g\in \SPD_3$ is
\beq
E(\phi,g)=E_2(\phi,g)+E_4(\phi,g)+E_0(\phi,g).
\eeq
Since $\SU(2)$ is diffeomorphic to $S^3$ the homotopy class of the field $\phi$ is, by the Hopf degree theorem, labelled by its topological degree $B\in\Z$,
\beq
B=\frac{1}{2\pi^2}\int_{\T^3}\phi^*\vol_{\SU(2)},
\eeq
where $\vol_{\SU(2)}$ is the volume form on $\SU(2)$ defined by the bi-invariant metric $h$ which, at $U=\I_2$ coincides with $(\cdot,\cdot)_{\su(2)}$ (equivalently, the round metric of radius $1$ on $S^3$). The mathematical problem 
that this paper addresses is to minimize $E(\phi,g)$, with respect to both 
$\phi$ and $g$, among all fields of fixed degree $B$ and all metrics $g\in \SPD_3$.

%%%%%%%%%%%%%%%%%%%%%%%%%%%%%%%%%%%%%%%%%%%%%%%%%
% SKYRME CRYSTALS
%%%%%%%%%%%%%%%%%%%%%%%%%%%%%%%%%%%%%%%%%%%%%%%%%

\section{Existence and uniqueness of minimizing metrics}
\label{sec:eumm}

For fixed $g\in\SPD_3$, $(\T^3,g)$ is a fixed, compact oriented Riemannian $3$-manifold, and it follows from a direct application of the calculus of variations that the
functional $\phi\mapsto E(\phi,g)$ attains a minimum in each degree class in the space of
finite energy maps in the Sobolev space $W^{1,2}(\T^3,\SU(2))$ \cite{auckap}. 
In this section we address the complementary variational problem: we fix a map
$\phi:\T^3\ra\SU(2)$ and establish existence, and uniqueness, of a minimizer of the function $\SPD_3\ra\R$, $g\mapsto E(\phi,g)$ which, for brevity, we will denote $E(g)$. 

It is convenient (and makes the result more comprehensive) to include the possibility of a sextic term in the Skyrme energy,
\beq
E_6(\phi,g)=\int_{\T^3}|\phi^*\Xi|^2\vol_g
\eeq
where $\Xi$ is some $3$-form on $\SU(2)$, for example, a constant multiple of 
$\vol_{\SU(2)}$. Terms of this kind are of phenomenological interest since they arise in so-called near BPS variants of the Skyrme model \cite{adasanwer2}. To specialize to the model of primary interest, we simply choose $\Xi=0$. 

We begin by analyzing in more detail the $g$ dependence of the terms in $E$. We first note that
\beq
    E_2(\varphi,g)=g^{ij}\sqrt{\det g}\int_{\T^3}(L_i,L_j)_{\su(2)}\vol_0=\sqrt{\det g}\Tr(Hg^{-1})
\eeq
where
\beq
H_{ij}(\varphi)=\int_{\T^3}(L_i,L_j)_{\su(2)}\vol_0
\eeq
is a symmetric positive semi-definite matrix depending on $\phi$ but independent of $g$. Let us assume that $\phi$ is $C^1$ (so that this matrix is well-defined) and is immersive somewhere, meaning that there is some point $p\in \T^3$ at which $\d\phi_p$ is invertible. Note that this follows immediately for all maps with $B\neq 0$.
By continuous differentiability, it follows that $\phi$ is immersive on some neighbourhood of $p$. Then the matrix $H$ is actually positive definite, for if not, there exists
$\vvec\in\R^3$ with $\vvec\cdot H\vvec=0$, whence
\beq
\int_{\T^3}|\vvec\cdot\Lvec|_{\su(2)}^2\vol_0=0
\eeq
and hence $\d\phi(v_i\cd/\cd x_i)=0$ almost everywhere. This contradicts immersivity of $\phi$ on a neighbourhood of $p$. We conclude that $H\in \SPD_3$.  

To understand $E_4(g)$ we appeal to an isomorphism peculiar to $3$ dimensions. Having chosen a $3$-form $\vol_0=\d x_1\wedge\d x_2 \wedge \d x_3$ on $\T^3$, there is an isomorphism $(T_p\T^3)\otimes\su(2)\ra (\Lambda^2 T^*_p \T^3)\otimes \su(2)$ defined by $X\mapsto \iota_X\vol_0$. Denote by $X_\phi$ the section of $T\T^3\otimes \su(2)$ whose image under this isomorphism is $\phi^*\Omega$, 
\beq
\phi^*\Omega=\iota_{X_\phi}\vol_0,
\eeq
and note that $X_\phi$ depends on
$\phi$, but is independent of $g$. We may similarly define a $g$-dependent section $X_\phi^g$ of $T\T^3\otimes\su(2)$ by using the isomorphism between $2$-forms and tangent vectors defined by $\vol_g$ instead of $\vol_0$:
\beq
\phi^*\Omega=\iota_{X_\phi^g}\vol_g=\iota_{\sqrt{|g|}X_\phi^g}\vol_0.
\eeq
Clearly $X_\phi^g=X_\phi/\sqrt{|g|}$. An alternative interpretation of $X_\phi^g$ is that it is the vector field metrically dual to the Hodge dual of $\phi^*\Omega$ with respect to the metric $g$, that is,
\beq
*_g \phi^*\Omega=\frac{1}{\sqrt{|g|}}\musicd_g X_\phi,
\eeq
where $*_g:\Lambda^2T^*\T^3\ra T^*\T^3$ denotes the Hodge isomorphism  and $\musicd_g:T\T^3\ra T^*\T^3$ the metric isomorphism defined by $g$. Hence
\beq
E_4(\varphi,g)=\frac14\int_{\T^3}|X_\phi/\sqrt{|g|}|_g^2\vol_g
=\frac{1}{4\sqrt{|g|}}\int_{\T^3}g(X_\phi,X_\phi)\vol_0
=\frac{1}{\sqrt{\det g}}\Tr(Fg)
\eeq
where
\beq
F_{ij}(\varphi)=\frac14\int_{\T^3}(X_{\phi,i},X_{\phi,j})_{\su(2)}\vol_0
\eeq
is another symmetric positive semi-definite matrix depending on $\phi$ but independent of $g$, and $X_\phi=:X_{\phi,i}\cd/\cd x_i$. In terms of the left currents
\beq
X_{\phi,i}=\frac12\epsilon_{ikl}[L_k,L_l],
\eeq
and so the matrix $F$ takes the explicit form
\beq
F_{ij}(\varphi)=-\frac{1}{32}\epsilon_{ikl}\epsilon_{jmn}\int_{\T^3}
\Tr([L_k,L_l][L_m,L_n])\vol_0.
\eeq
Once again, our non-degeneracy assumption on $\phi$ (that it is $C^1$ and somewhere immersive) implies that $F$ is positive definite. For if not, there exists $\vvec\in\R^3$ such that
$\vvec\cdot F\vvec=0$, whence $\vvec\cdot X_\phi=0$ and so
$*_g\phi^*\Omega(v_i\cd/\cd x_i)=0$. But then $\phi^*\Omega$
vanishes on every plane in $T\T^3$ $g$-orthogonal to $\vvec$, which contradicts nondegeneracy of $\Omega$ and immersivity of $\phi$.

The remaining terms of $E$ are more straightforward.
\beq
E_0(\varphi,g)=\int_{\T^3}V(\phi)\vol_g=C_0\sqrt{\det g},
\eeq
where
\beq
C_0(\varphi)=\int_{\T^3}V(\phi)\vol_0\geq 0
\eeq
is a constant. Finally, we note that $\phi^*\Xi=f_\Xi\vol_0$ for some real function $f_\Xi:\T^3\ra\R$ independent of $g$. Then 
\beq
f_\Xi\vol_0=(*_g\phi^*\Xi)\vol_g=(*_g\phi^*\Xi)\sqrt{\det g}\vol_0,
\eeq
so $*_g\phi^*\Xi=f_\Xi/\sqrt{\det g}$. Hence
\beq
E_6(\varphi,g)=\int_{\T^3}\phi^*\Xi\wedge *_g\phi^*\Xi=\frac{C_6}{\sqrt{\det g}}
\eeq
where
\beq
C_6(\varphi)=\int_{\T^3}f_\Xi^2\vol_0\geq 0
\eeq
is a constant. Note that we allow the possibility that $C_0$ or $C_6$ is $0$, to accommodate versions of the model with no potential or sextic term. 

In summary, for a fixed $C^1$ map $\phi:\T^3\ra\SU(2)$ which is immersive somewhere, the total Skyrme energy as a function of the metric $g$ on $\T^3$ is
\beq\label{scgi}
E(g):=E(\left.\varphi\right|_\textrm{fixed},g)=\sqrt{\det g}\Tr(Hg^{-1})+\frac{1}{\sqrt{\det g}}\Tr(Fg)
+C_0\sqrt{\det g}+\frac{C_6}{\sqrt{\det g}},
\eeq
where $H,F\in\SPD_3$ and $C_0,C_6\in[0,\infty)$ are constants. 
We wish to prove that the function $E:\SPD_3\ra\R$ attains a unique global minimum, and has no other critical points. Before doing so, we note that 
$E=\wt{E}\circ \sigma$ where
\beq\label{scgii}
\wt{E}:\SPD_3\ra\R,\qquad \wt{E}(\Sigma)=\Tr(H\Sigma^{-1})+\Tr(F\Sigma)
+\frac{C_0}{\det\Sigma}+{C_6}{{\det \Sigma}}
\eeq
and $\sigma$ is the map
\beq
\sigma:\SPD_3\ra\SPD_3,\qquad g\mapsto\Sigma=\frac{g}{\sqrt{\det g}}.
\eeq
Since $\sigma$ is a diffeomorphism, we may equivalently prove that $\wt{E}:\SPD_3\ra\R$ attains a unique global minimum and has no other critical points. We do this in two stages.

\begin{prop} \label{prop1}
The function $\wt{E}:\SPD_3\ra\R$ of equation \eqref{scgii} attains a global
minimum.
\end{prop}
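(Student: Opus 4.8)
The plan is to show that $\wt E$ is proper (i.e.\ its sublevel sets are compact) and continuous on $\SPD_3$, so that it attains a global minimum by a standard Weierstrass argument. Since $\SPD_3$ is an open subset of the vector space of symmetric $3\times3$ matrices, the only thing that can go wrong is that a minimizing sequence $\Sigma_n$ escapes to the boundary: either some eigenvalue of $\Sigma_n$ tends to $0$, or some eigenvalue tends to $+\infty$. I would rule both out using positive definiteness of $H$ and $F$. Write the eigenvalues of $\Sigma$ as $0<\lambda_1\le\lambda_2\le\lambda_3$. Because $H\in\SPD_3$, there is a constant $h>0$ with $H\ge h\,\I_3$, hence $\Tr(H\Sigma^{-1})\ge h\,\Tr(\Sigma^{-1})=h\sum_i\lambda_i^{-1}\ge h/\lambda_1$; similarly $F\ge f\,\I_3$ for some $f>0$ gives $\Tr(F\Sigma)\ge f\sum_i\lambda_i\ge f\lambda_3$. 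Together with $C_0/\det\Sigma\ge0$ and $C_6\det\Sigma\ge0$, this yields the bound
\beq\label{coercive}
\wt E(\Sigma)\ \ge\ \frac{h}{\lambda_1}+f\lambda_3.
\eeq

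From \eqref{coercive} the properness is immediate: if $\wt E(\Sigma)\le K$ then $\lambda_1\ge h/K$ and $\lambda_3\le K/f$, so all eigenvalues of $\Sigma$ lie in the compact interval $[h/K,\,K/f]$. Hence the sublevel set $\{\wt E\le K\}$ is a closed (by continuity of $\wt E$) and bounded subset of symmetric matrices on which the smallest eigenvalue is bounded away from $0$; it is therefore a compact subset of $\SPD_3$. Taking $K$ slightly above $\inf_{\SPD_3}\wt E$ (which is finite since, e.g., $\wt E(\I_3)<\infty$), a minimizing sequence may be assumed to lie in this compact set, and passing to a convergent subsequence gives a minimizer $\Sigma_*\in\SPD_3$ with $\wt E(\Sigma_*)=\inf\wt E$. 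Continuity of $\wt E$ on $\SPD_3$ is clear because $\Sigma\mapsto\Sigma^{-1}$ and $\Sigma\mapsto\det\Sigma$ are smooth on $\SPD_3$ and $\Tr$ is linear.

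The only real subtlety — and the step I would be most careful about — is the justification that $H$ and $F$ are genuinely positive definite, not merely positive semi-definite, since the lower bound \eqref{coercive} collapses if either fails to be coercive. This, however, is exactly what the non-degeneracy hypothesis on $\phi$ (that it is $C^1$ and immersive at some point) buys us, and it has already been established in the discussion preceding the proposition: $\vvec\cdot H\vvec=0$ forces $\d\phi$ to annihilate a fixed direction almost everywhere, contradicting immersivity near $p$, and likewise $\vvec\cdot F\vvec=0$ contradicts nondegeneracy of $\Omega$ together with immersivity. So under the standing assumption we may take $h,f>0$, and the argument above goes through. Uniqueness of the minimizer and absence of other critical points are deliberately \emph{not} claimed here; they will be handled in the second stage (presumably via a convexity/geodesic-convexity argument on $\SPD_3$).
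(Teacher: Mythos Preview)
Your proof is correct and follows essentially the same strategy as the paper: coercivity of $\wt E$ from positive definiteness of $H$ and $F$, followed by a compactness/minimizing-sequence argument. The only cosmetic difference is that the paper parametrizes $\SPD_3$ via $(0,\infty)^3\times O(3)$ and uses compactness of $O(3)$ to extract the uniform lower bound, whereas you obtain the same bound more directly from the operator inequalities $H\ge h\,\I_3$, $F\ge f\,\I_3$.
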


\begin{proof} Clearly $\wt{E}$ is bounded below (by $0$). Let $E_*=\inf \wt{E}$. We must show that there exists $\Sigma_*\in\SPD_3$ with $\wt{E}(\Sigma_*)=E_*$. 

 Consider the map
\beq
f:(0,\infty)^3\times O(3)\ra\SPD_3,\qquad
f(\lamvec,\OO)=\OO D_\lamvec\OO^T,\quad 
D_\lamvec:=\diag(\lambda_1,\lambda_2,\lambda_3).
\eeq
This map is surjective: given any $\Sigma\in\SPD_3$ we may take $\lambda_i$ to be its eigenvalues and $\OO$ to be an orthogonal matrix whose columns are its corresponding eigenvectors. Hence, it suffices to prove that 
\beq
(\wt{E}\circ f)(\lamvec,\OO)=
\Tr(\OO^TH\OO D_\lamvec^{-1})+\Tr(\OO^TF\OO D_\lamvec)
+\frac{C_0}{\lambda_1\lambda_2\lambda_3}+C_6\lambda_1\lambda_2\lambda_3
\eeq
attains the value $E_*$.

Let $(\lamvec_n,\OO_n)$ be a sequence in $(0,\infty)^3\times O(3)$ such that
\beq
(\wt{E}\circ f)(\lamvec_n,\OO_n)\ra E_*.
\eeq
Such a sequence exists since $f$ is surjective. Consider the six functions
\beq
(\OO^TH\OO)_{ii}, (\OO^TF\OO)_{ii}:O(3)\ra (0,\infty),
\qquad i=1,2,3,
\eeq
mapping $\OO$ to the diagonal entries of $\OO^TH\OO,
\OO^TF\OO\in\SPD_3$, and note that these functions are strictly positive since $H,F$ are positive definite. Since these functions are smooth and $O(3)$ is compact, there exists $\alpha>0$ such that, for all $\OO\in O(3)$, $(\OO^TH\OO)_{ii}, (\OO^TF\OO)_{ii}\geq \alpha$. Hence, for all $(\lamvec,\OO)\in(0,\infty)^3\times O(3)$,
\beq\label{thebound}
(\wt{E}\circ f)(\lamvec,\OO)
\geq \alpha\left(\frac{1}{\lambda_1}+\frac{1}{\lambda_2}+\frac{1}{\lambda_3}+\lambda_1+\lambda_2+\lambda_3\right).
\eeq
We may assume that $(\wt{E}\circ f)(\lamvec_n,\OO_n)\leq E_*+1$ for all $n$, so $\lamvec_n\in[K^{-1},K]^3$ for all $n$, where $K=E_*/\alpha$. Hence, the sequence $(\lamvec_n,\OO_n)$ takes values in a compact subset of $(0,\infty)^3\times O(3)$, and so has a convergent subsequence, converging to $(\lamvec_*,\OO_*)$ say, which, without loss of generality, we may assume is $(\lamvec_n,\OO_n)$ itself. So
$(\wt{E}\circ f)(\lamvec_n,\OO_n)\ra E_*$ and
$(\lamvec_n,\OO_n)\ra (\lamvec_*,\OO_*)$. But $(\wt{E}\circ f)$ is continuous, so $(\wt{E}\circ f)(\lambda_*,\OO_*)=E_*$.

It follows that $\wt{E}(\Sigma_*)=E_*$ where
\beq
\Sigma_*=\OO_*D_{\lamvec_*}\OO^T_*,
\eeq
which completes the proof.
\end{proof}

We note in passing that the minimizing {\em metric} whose existence follows from Proposition \ref{prop1} is
\beq
g_*=\sigma^{-1}(\Sigma_*)=\frac{\Sigma_*}{\det\Sigma_*}.
\eeq

It remains to prove that $\wt{E}$ has no other critical points. We achieve this by proving that $\wt{E}$ is {\em strictly
convex}, in the following sense:

\begin{defn}\label{scdef}
A function $f:M\ra\R$ on a Riemannian manifold $M$ is 
{\bf convex} if, for all non-constant geodesics $\gamma(t)$ in $M$,
$(f\circ\gamma)''(t)\geq0$, and {\bf strictly convex} if, for all such geodesics, $(f\circ\gamma)''(t)>0$.
\end{defn}

To apply this definition to $\wt{E}$, we must equip $\SPD_3$ with a Riemannian metric, $G$. The correct choice for our purposes is 
\beq
G_\Sigma:T_\Sigma\SPD_3\times T_\Sigma\SPD_3\ra\R,\qquad
G_\Sigma(\xi_1,\xi_2)=\Tr(\Sigma^{-1}\xi_1\Sigma^{-1}\xi_2),
\eeq
where we have identified $T_\Sigma\SPD_3$ with $\Sym_3$, the vector space of symmetric $3\times 3$ real matrices. We will exploit several useful properties of the metric $G$, established in
\cite{Pennec_2006}.
It is invariant under the $GL(3,\R)$ action
\beq
GL(3,\R)\times\SPD_3\ra\SPD_3,\qquad 
(A,\Sigma)\mapsto A\Sigma A^T
\eeq
on $\SPD_3$. It is also inversion invariant, that is,
\beq
\iota:\SPD_3\ra\SPD_3,\qquad \iota(\Sigma)=\Sigma^{-1}
\eeq
is an isometry. The general geodesic through $\I_3$ takes the form
\beq
\gamma(t)=\exp(t\xi),\qquad \xi\in\Sym_3,
\eeq
and hence a general nonconstant geodesic through $\Sigma$ is
\beq\label{ag}
\gamma(t)=A\exp(t\xi)A^T,
\eeq
where $A\in GL(3,\R)$ satisfies $AA^T=\Sigma$, and $\xi\neq 0$. 
Finally, it is complete and between any pair of distinct points $\Sigma_1$, $\Sigma_2$, there is a geodesic, unique up to parametrization.

\begin{prop}\label{prop2} 
The function $\wt{E}:\SPD_3\ra\R$ of equation \eqref{scgii} is strictly convex with respect to the metric $G$.
\end{prop}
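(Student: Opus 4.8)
The plan is to exploit the $GL(3,\R)$-invariance of the metric $G$ together with the fact that geodesics through an arbitrary point have the explicit form $\gamma(t)=A\exp(t\xi)A^T$ with $AA^T=\Sigma$ and $\xi\in\Sym_3\setminus\{0\}$. Substituting such a geodesic into $\wt E$ and using cyclicity of the trace, one finds
\[
\wt E(\gamma(t))=\Tr\!\bigl(\wt H e^{-t\xi}\bigr)+\Tr\!\bigl(\wt F e^{t\xi}\bigr)
+C_0\,e^{-t\,\Tr\xi}+C_6\,e^{t\,\Tr\xi},
\]
where $\wt H=A^{-1}H (A^{-1})^{T}$ and $\wt F=A^{T}F A$ are again in $\SPD_3$ (conjugation by $GL(3,\R)$ preserves positive definiteness), and $\det\gamma(t)=\det\Sigma\cdot e^{t\,\Tr\xi}$. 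So the problem reduces to showing that each of the four summands is a convex function of $t$, and that at least one is strictly convex — since $\xi\neq 0$, either $\xi$ has a nonzero off-diagonal/eigenvalue structure making the first two strictly convex, or $\xi$ is a nonzero multiple of a traceless matrix, etc.; I will need to check that the cases always combine to give strict convexity.

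The key computational step is the first two terms. Diagonalise $\xi=R\,D_{\mu}R^{T}$ with $R\in O(3)$ and $D_\mu=\diag(\mu_1,\mu_2,\mu_3)$; then $\Tr(\wt F e^{t\xi})=\sum_{i,j}(R^{T}\wt F R)_{ii\to}\ldots$, more precisely $\Tr(\wt F e^{t\xi})=\sum_{i}(R^{T}\wt F R)_{ii}\,e^{t\mu_i}$ once one notes $e^{t\xi}=R\,\diag(e^{t\mu_i})\,R^{T}$. Hence
\[
\frac{d^2}{dt^2}\Tr(\wt F e^{t\xi})=\sum_{i=1}^{3}(R^{T}\wt F R)_{ii}\,\mu_i^2\,e^{t\mu_i},
\]
and each coefficient $(R^{T}\wt F R)_{ii}$ is \emph{strictly positive} because $\wt F\in\SPD_3$. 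Likewise $\frac{d^2}{dt^2}\Tr(\wt H e^{-t\xi})=\sum_i(R^{T}\wt H R)_{ii}\,\mu_i^2\,e^{-t\mu_i}>0$ unless every $\mu_i=0$, i.e. unless $\xi=0$. The two exponential terms contribute $C_0(\Tr\xi)^2 e^{-t\Tr\xi}+C_6(\Tr\xi)^2 e^{t\Tr\xi}\geq 0$. Therefore $(\wt E\circ\gamma)''(t)>0$ for every nonconstant geodesic, which is exactly strict convexity in the sense of Definition \ref{scdef}.

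I expect the only mild subtlety — not really an obstacle — to be bookkeeping: confirming that $G$-geodesics through $\Sigma$ really are of the stated form $A\exp(t\xi)A^{T}$ (this is quoted from \cite{Pennec_2006} and recorded in \eqref{ag}), that $A^{-1}H(A^{-1})^T$ and $A^TFA$ remain positive definite, and that "nonconstant" translates precisely to $\xi\neq 0$ so that $\sum_i(R^T\wt H R)_{ii}\mu_i^2 e^{-t\mu_i}$ is genuinely positive. Once strict convexity is established, a critical point of $\wt E$ would have to be a point where $(\wt E\circ\gamma)'(0)=0$ for every geodesic $\gamma$ emanating from it; strict convexity forces such a point to be the unique global minimum, so combined with Proposition \ref{prop1} this yields existence and uniqueness of the minimizer and the absence of other critical points — the content needed for Corollary \ref{cor4}.
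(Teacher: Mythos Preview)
Your argument is correct. Both you and the paper exploit the explicit form \eqref{ag} of geodesics, but the executions differ. The paper decomposes $\wt E=f_H\circ\iota+f_F+C_0\det\circ\iota+C_6\det$, proves once that $f_M(\Sigma)=\Tr(M\Sigma)$ is strictly convex for any $M\in\SPD_3$ via the identity $(f_M\circ\gamma)''(0)=\Tr\bigl((A\xi)^TM(A\xi)\bigr)=\sum_i\vvec_i\cdot M\vvec_i>0$ (columns $\vvec_i$ of $A\xi$), and then invokes that inversion $\iota$ is a $G$-isometry to handle the $\Sigma^{-1}$ terms without further computation; positivity for general $t$ is obtained by the time-translation trick. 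You instead substitute the full geodesic into $\wt E$ at once, diagonalise $\xi$, and read off $(\wt E\circ\gamma)''(t)$ explicitly as a sum of positive-coefficient exponentials for every $t$. Your route is more computational but self-contained: it avoids appealing to the isometry property of $\iota$ and gives the second derivative at all times directly. The paper's route is more modular and reusable (the lemma on $f_M$ could be applied elsewhere). One minor bookkeeping slip: the last two terms should carry factors of $(\det\Sigma)^{\pm1}$, i.e.\ $\dfrac{C_0}{\det\Sigma}\,e^{-t\,\Tr\xi}+C_6\det\Sigma\,e^{t\,\Tr\xi}$; since $\det\Sigma>0$ this does not affect convexity.
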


\begin{proof} Given a constant $M\in\SPD_3$, consider the function 
\beq
f_M:\SPD_3\ra\R,\qquad f_M(\Sigma)=\Tr(M\Sigma).
\eeq
 Let $\gamma$ be an arbitrary non-constant geodesic, as in \eqref{ag}. Then
\bea
(f_M\circ \gamma)''(0)&=&\frac{d^2\:}{dt^2}\bigg|_{t=0}
\Tr(MA\exp(t\xi)A^T)
=\Tr(MA\xi^2A^T)\nonumber \\
&=&\Tr((A\xi)^TM(A\xi))
=\sum_{i=1}^3 \vvec_i\cdot M\vvec_i
\eea
where $\vvec_i$ are the columns of $A\xi$. Since $M$ is positive definite, it follows that $(f_M\circ \gamma)''(0)\geq 0$, and equals $0$ only if $\vvec_1=\vvec_2=\vvec_3=0$. But
$\xi\neq 0$ (the geodesic is nonconstant) so at least one $\vvec_i\neq 0$. Hence $(f_M\circ\gamma)''(0)>0$ for all
nonconstant geodesics $\gamma$. It follows that $(f_M\circ\gamma)''(T)>0$ for all nonconstant geodesics and all $T\in\R$, since for all geodesics $\gamma$ and constants $T$, $\wt\gamma(t)=\gamma(t+T)$ is a geodesic. 

Similarly, $\det:\SPD_3\ra\R$ is convex since, for all nonconstant geodesics
\bea
(\det\circ\gamma)''(0)&=&
\frac{d^2\:}{dt^2}\bigg|_{t=0}(\det A)^2\det\exp(t\xi)
=(\det A)^2\frac{d^2\:}{dt^2}\bigg|_{t=0}\exp(t\Tr\xi)\nonumber \\
&=&(\det A)^2(\Tr\xi)^2\geq 0
\eea

It follows that 
\beq
\wt{E}=f_H\circ\iota+f_F+C_0\det\circ\iota+ C_6\det
\eeq
is strictly convex, since $H,F\in\SPD_3$, $\iota$ is an isometry, and $C_0,C_6\geq 0$.
\end{proof}

Propositions \ref{prop1} and \ref{prop2} quickly yield the desired result.

\begin{cor}\label{cor4} Let $\phi:\T^3\ra\SU(2)$ be a fixed $C^1$ map that is immersive somewhere. Then the function $\SPD_3\ra\R$ mapping a flat metric $g$ on $\T^3$ to the Skyrme energy $E(\phi,g)$ attains a unique global minimum, and has no other critical points.
\end{cor}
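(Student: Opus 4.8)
The plan is to reduce everything to the two propositions just proved, using the factorization $E=\wt{E}\circ\sigma$ recorded in the excerpt, where $\sigma:\SPD_3\to\SPD_3$, $g\mapsto g/\sqrt{\det g}$, is a diffeomorphism. Since ``critical point'' and ``global minimum'' are notions invariant under diffeomorphism, it suffices to establish the statement for $\wt{E}$: that $\wt{E}:\SPD_3\to\R$ has a unique critical point, which is its global minimum. First I would invoke Proposition \ref{prop1} to produce a global minimizer $\Sigma_*\in\SPD_3$; this is exactly where the hypotheses on $\phi$ enter, via $H,F\in\SPD_3$, so that \eqref{scgii} is valid and $\wt{E}$ is bounded below with the coercive behaviour in the eigenvalue directions exploited in that proof. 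Then I would invoke Proposition \ref{prop2}, which says that $\wt{E}$ is strictly convex along every nonconstant geodesic of the metric $G$ on $\SPD_3$.

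The remaining step is the standard deduction of uniqueness from strict geodesic convexity on a complete manifold. Suppose $\Sigma_0\in\SPD_3$ is a critical point of $\wt{E}$ with $\Sigma_0\neq\Sigma_*$. Using completeness of $(\SPD_3,G)$ and the existence of a geodesic joining any two distinct points (both recalled in the excerpt), choose a nonconstant geodesic $\gamma$ with $\gamma(0)=\Sigma_*$ and $\gamma(1)=\Sigma_0$, and set $f=\wt{E}\circ\gamma$. Since $\Sigma_*$ is a global minimum of $\wt{E}$ on an open manifold, $d\wt{E}_{\Sigma_*}=0$, so $f'(0)=0$; and $f'(1)=d\wt{E}_{\Sigma_0}(\dot\gamma(1))=0$ because $\Sigma_0$ is critical. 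But Proposition \ref{prop2} gives $f''>0$ everywhere on the interval, so $f'$ is strictly increasing and cannot vanish at both $0$ and $1$, a contradiction. Hence $\Sigma_*$ is the only critical point of $\wt{E}$, and in particular the global minimizer is unique. Pulling back through the diffeomorphism $\sigma$, the metric $g_*=\Sigma_*/\det\Sigma_*$ is the unique global minimum of $g\mapsto E(\phi,g)$ and this function has no other critical points.

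I do not expect a serious obstacle: the two propositions carry all the analytic weight, and what is left is a short, textbook-style convexity argument. The only points requiring attention are bookkeeping ones — checking that $\sigma$ is genuinely a diffeomorphism (so that it transports critical points and minima faithfully), that the global minimum of $\wt{E}$ is an interior point of $\SPD_3$ so that its differential vanishes there, and keeping in mind that the hypotheses ``$C^1$ and somewhere immersive'' are precisely what guarantee $H,F\in\SPD_3$; without the latter, $\wt{E}$ need not be strictly convex (nor even bounded below) and the conclusion genuinely fails.
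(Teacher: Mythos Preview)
Your proposal is correct and follows essentially the same route as the paper: reduce to $\wt{E}$ via the diffeomorphism $\sigma$, use Proposition~\ref{prop1} for existence, and use strict geodesic convexity (Proposition~\ref{prop2}) along a geodesic joining two putative critical points to derive a contradiction. The only cosmetic difference is that the paper phrases the last step via Rolle's Theorem applied to $(\wt{E}\circ\gamma)'$ to produce a zero of $(\wt{E}\circ\gamma)''$, whereas you argue directly that $f'$ is strictly increasing; these are equivalent.
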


\begin{proof} As previously established $E(\phi,g)=\wt{E}(\sigma(g))$ where $\wt{E}$ is the function defined in \eqref{scgii} and $\sigma$ is a diffeomorphism of $\SPD_3$. 
By Proposition \ref{prop1},
$\wt{E}$ attains a minimum at some $\Sigma_*\in\SPD_3$, whence
$E$ attains a global minimum at $g_*=\sigma^{-1}(\Sigma_*)$. Assume, towards a contradiction, that $E$ has a second critical point $g_{**}\neq g_*$. Then $\wt{E}$ has a second critical point
at $\Sigma_{**}=\sigma(g_{**})\neq\Sigma_*$. Let $\gamma:[0,1]\ra\SPD_3$ be a geodesic (with respect to $G$) with $\gamma(0)=\Sigma_*$ and $\gamma(1)=\Sigma_{**}$. Then, by Rolle's Theorem applied to $(\wt{E}\circ\gamma)':[0,1]\ra\R$, there exists $t\in(0,1)$ at which $(\wt{E}\circ\gamma)''(t)=0$. But this contradicts Proposition \ref{prop2}. 
\end{proof}

In the case of the standard Skyrme model without a potential ($V=0$ and $\Xi=0$), we can find the minimizing metric $g_*$ explicitly. We note that, in this case
\beq
\wt{E}(\Sigma)=\Tr(H\Sigma^{-1}+F\Sigma),
\eeq
whence
\beq
\d\wt{E}_\Sigma(\xi)=\Tr(-H\Sigma^{-1}\xi\Sigma^{-1}+F\xi)
=\Tr((F-\Sigma^{-1}H\Sigma^{-1})\xi).
\eeq
Hence, the unique critical point $\Sigma_*$ of $\wt{E}$ satisfies 
\bea
F&=&\Sigma_*^{-1}H\Sigma_*^{-1}\nonumber \\
\Rightarrow\quad (F\Sigma_*)^2&=&FH\nonumber \\
\Rightarrow\quad \Sigma_*&=& (F^{1/2})^{-1}H^{1/2}
\eea
where the matrix square root function $\SPD_3\ra\SPD_3$,
$M\mapsto M^{1/2}$, is defined spectrally. Then
\beq
g_*=\frac{\Sigma_*}{\det\Sigma_*}=\left(\frac{\det F}{\det H}\right)^{1/2}F^{-1/2}H^{1/2}.
\eeq

We henceforth set $\Xi=0$. In the case $V\neq 0$, of primary interest, we have not been able to solve for the minimum of $E(g)$ explicitly. Instead, we resort to a numerical method described in the next section.

\section{The numerical method}\label{sec:numerics}\label{sec:anf}

We now return to the problem of primary interest: to minimize $E(\phi,g)$ among all smooth maps $\T^3\ra\SU(2)$ of fixed degree $B$, and all flat metrics $g\in\SPD_3$. Our numerical
scheme is similar to ones introduced in \cite{Leask_2022, babspewin} and based on the idea of arrested Newton flow. For fixed $\phi$, we interpret $E$ as a potential energy on the manifold $\SPD_3$ and solve Newton's law of motion
\begin{equation}\label{g nf}
\ddot g = -\grad_g E(\phi,g)
\end{equation}
with initial data $g(0)=g_0$.  This solution begins to run ``downhill'' in $\SPD_3$. We monitor $E(\phi,g(t))$ and, at any time $t_*$ where $\frac{d}{dt}{E}(\phi,g(t))>0$ we arrest the flow, that is, stop and restart it at the current position but with velocity $0$. The flow converges to the unique minimising metric $g_\phi$.  We minimise over $\phi$ by a similar technique, solving
\begin{equation}\label{phi nf}
\ddot \phi = -\grad_\phi E(\phi,g_\phi)
\end{equation}
with initial data $\phi(0)=\phi_0$ (here $\grad_\phi E$ is the derivative in the first argument, $\phi$, with $g$ treated as constant).  Again, we arrest the flow if $E$ starts to increase.

In practice, we discretize space, placing $\phi$ on a cubic grid of
$N^3$ points with lattice spacing $h=1/N$ and periodic boundary conditions. We replace the spatial derivatives of $\phi$ occurring in $E$ by finite difference approximations. This reduces the \eqref{phi nf} to a system of ODEs in $\R^{4N^3}$, which we then solve using a 4th order Runge-Kutta scheme with fixed time step $\delta t$. The arresting criterion is that
$E(t+\delta t)> E(t)$. After each iteration of the Runge-Kutta scheme, the metric $g_\phi$ is recalculated in a similar way by solving the ODE \eqref{g nf}, using the metric $g_\phi$ from the previous iteration as initial datum.  Once the flow for $g$ has converged, the next iteration of $\phi$ is calculated, and so on.  Each flow is  deemed to have converged to a static solution if the sup norm of $\grad E$ falls below some tolerance. The numerical results presented hereafter were obtained with $N=201$.  The time steps $\delta$ used were 0.0017 for the flow in $\phi$ and 0.1 for the flow in $g$.  The tolerances were $10^{-5}$ for the flow in $\phi$ and $10^{-7}$ for the flow in $g$.

Implementing the method also entails choosing (formal) Riemannian metrics on $\SPD_3$ and $C^\infty(\T^3,SU(2))$. These are required to make sense of both $\grad E$ and the connexions
$\nabla$ occurring implicitly in \eqref{g nf} and \eqref{phi nf}. On $\SPD_3$ we choose the Euclidean metric induced by identifying $\SPD_3$ as a subset of $\R^9$ in the obvious way, that is
\beq
(\xi_1,\xi_2)_{\SPD_3}=\Tr(\xi_1^T\xi_2).
\eeq
Note that this differs from the metric $G$ used in section \ref{sec:eumm}; it is simpler for the current purpose.
On $C^\infty(\T^3,SU(2))$ we choose the $L^2$ metric defined by the volume form $\vol_0$,
\beq
\ip{\eta_1,\eta_2}_{L^2}=\int_{\T^3}h(\eta_1,\eta_2)\vol_0,
\eeq
which is independent of $g$.

For the purposes of numerics, it is convenient to use the sigma model formulation of the model, that is, exploit the isometry between
$(\SU(2),h)$ and $S^3$ with its round metric of unit radius. So, we identify
\beq
\SU(2)\ni\left(\begin{array}{cc}
\phi_0+i\phi_3 & i\phi_1+\phi_2 \\
i\phi_1-\phi_2 & \phi_0-i\phi_3\end{array}\right)
\leftrightarrow(\phi_0,\phi_1,\phi_2,\phi_3)\in S^3
\eeq
and interpret the Skyrme field as a unit length vector in Euclidean $\R^4$. The terms occurring in the Skyrme energy are easily converted to this formulation,
\bea
    E_2(\phi)&=&\int_{\T^3} g^{ij} (\cd_i\phi\cdot\cd_j\phi) \sqrt{|g|}\vol_0\label{pion E2}\\
    E_4(\phi)&=&\frac12\int_{\T^3} (g^{ij}g^{kl}-g^{il}g^{jk}) (\cd_i\phi\cdot\cd_j\phi) (\cd_k\phi\cdot\cd_l\phi)
    \sqrt{|g|}\vol_0\label{pion E4}\\
    E_0(\phi)&=&2m_\pi^2\int_{\T^3}(1-\phi_0)\sqrt{|g|}\vol_0\label{pion E0},
\eea
as are the matrices defined in section \ref{sec:eumm},
\bea
    H_{ij}&=&\int_{\T^3}(\cd_i\phi\cdot\cd_j\phi)\vol_0\\
    F_{ij}&=&\frac{1}{4}\eps_{ikl}\eps_{jmn}\int_{\T^3}\left\{(\cd_k\phi\cdot\cd_m\phi)(\cd_l\phi\cdot\cd_n\phi) - (\cd_k\phi\cdot\cd_n\phi)(\cd_l\phi\cdot\cd_m\phi) \right\}\vol_0.
\eea
Now, the gradient $E=E_2+E_4+E_0$, regarded as a function on $C^\infty(\T^3,S^3)$, is
\beq
(\grad_\phi E)(\phi,g)_{C^\infty}=
\sqrt{|g|}P_\phi\left(-2g^{ij}\cd_i\cd_j\phi
-2(g^{li}g^{jk}-g^{ij}g^{kl})\cd_i(\cd_j\phi\cdot\cd_k\phi\, \cd_l\phi)-
2m_\pi^2\evec_0\right)
\eeq
where $\evec_0=(1,0,0,0)$ and $P_\phi:\R^4\ra\R^4$ is the projector orthogonal to $\phi$, that is 
\beq
P_\phi(\vvec)=\vvec-(\phi\cdot\vvec)\phi.
\eeq 
 
The gradient of $E$ regarded as a function on $\SPD_3$ is easily deduced from \eqref{scgii} and the diffeomorphism $\sigma:\SPD_3\ra\SPD_3$. We note that
\beq
\d\wt{E}_\Sigma(\xi)=\Tr((F-\Sigma^{-1}H\Sigma^{-1}
-\frac{C_0\Sigma^{-1}}{\det\Sigma})\xi),
\eeq
and so
\beq
(\grad\wt{E})(\Sigma)=F-\Sigma^{-1}H\Sigma^{-1}
-\frac{C_0\Sigma^{-1}}{\det\Sigma}.
\eeq
Furthermore
\beq
\d\sigma_{g}(\xi)=\frac{1}{\sqrt{\det g}}\left(\xi-\frac12\Tr(g^{-1}\xi)g\right),
\eeq
whence
\bea
\d E_g(\xi)&=&\d\wt{E}_{\sigma(g)}\circ\d\sigma_g(\xi)\nonumber\\
&=&\Tr\left\{\grad\wt{E}(|g|^{-1/2}g)|g|^{-1/2}(\xi-\frac12
\Tr(g^{-1}\xi)g)\right\}\nonumber \\
&=&\Tr\left\{|g|^{-1/2}\left[
\grad\wt{E}(|g|^{-1/2}g)-\frac12\Tr(g\grad\wt{E}(|g|^{-1/2}g))g^{-1}\right]\xi\right\},
\eea
and so
\beq
(\grad_g E)(\phi,g)_{\SPD_3}=|g|^{-1/2}\left[
\grad\wt{E}(|g|^{-1/2}g)-\frac12\Tr(g\grad\wt{E}(|g|^{-1/2}g))g^{-1}\right].
\eeq

%%%%%%%%%%%%%%%%%%%%%%%%%%%%%%%%%%%%%%%%%%%%%%%%%
% SKYRME CRYSTAL SOLUTIONS
%%%%%%%%%%%%%%%%%%%%%%%%%%%%%%%%%%%%%%%%%%%%%%%%%

\section{Skyrme crystal solutions}
\label{sec: Skyrme crystal solutions}

This section presents the results of the numerical scheme just described in the charge $B=4$ sector, concentrating on the model with pion mass $m_\pi=1$. Our approach is to treat the pion mass as a continuous variable parameter $m_\pi=t\geq 0$: we minimize the energy
\beq
E_{(t)}(\phi,g)=E_2(\phi,g)+E_4(\phi,g)+t\int_{\T^3}(1-\phi_0)\vol_g
\eeq
starting in the massless case $t=0$, and then increasing $t$ gradually to $1$.

We begin by reviewing the lowest energy solution known in the massless case, $t=0$, found independently by Kugler and Shtrikman \cite{Kugler_1988,Kugler_1989}  and Castillejo \textit{et al.} \cite{Castillejo_1989}. This can be found by starting with the initial field \cite{Castillejo_1989}
\begin{equation}\label{1/2}
    \phi_0  = - c_1 c_2 c_3, \quad \phi_1 = s_1 \sqrt{1-\frac{s_2^2}{2}-\frac{s_3^2}{2}+\frac{s_2^2 s_3^2}{3}},
%\label{eq: Skyrme crystal solutions - Fourier}
\end{equation}
$\phi_2,\phi_3$ obtained by cyclic permutation, where $s_i=\sin 2\pi x_i$ and
$c_i=\cos 2\pi x_i$, and initial metric $g=\I_3$. This quickly converges under arrested Newton flow to a solution with $\phi$
close to \eqref{1/2} and $g=L^2\I_3$ with $L=4.61$, corresponding to a cubic period lattice of side length $L$. This solution is depicted in figure \ref{fig: 1/2-crystal}. It represents a simple cubic lattice of half-skyrmions. That is, $\phi$ maps each of the eight subcubes of side length $L/2$ to either the upper ($\phi_0\geq0$) or lower ($\phi_0\leq 0$) hemisphere of $S^3$, contributing charge $B=1/2$ to the total topological charge of the unit cell. For this reason, we refer to this solution as the $1/2$-crystal and denote it $(\phi_{1/2},g_{1/2})$.

\begin{figure}[t]
    \centering
    \begin{subfigure}[b]{0.4\textwidth}
        \centering
        \includegraphics[width=\textwidth]{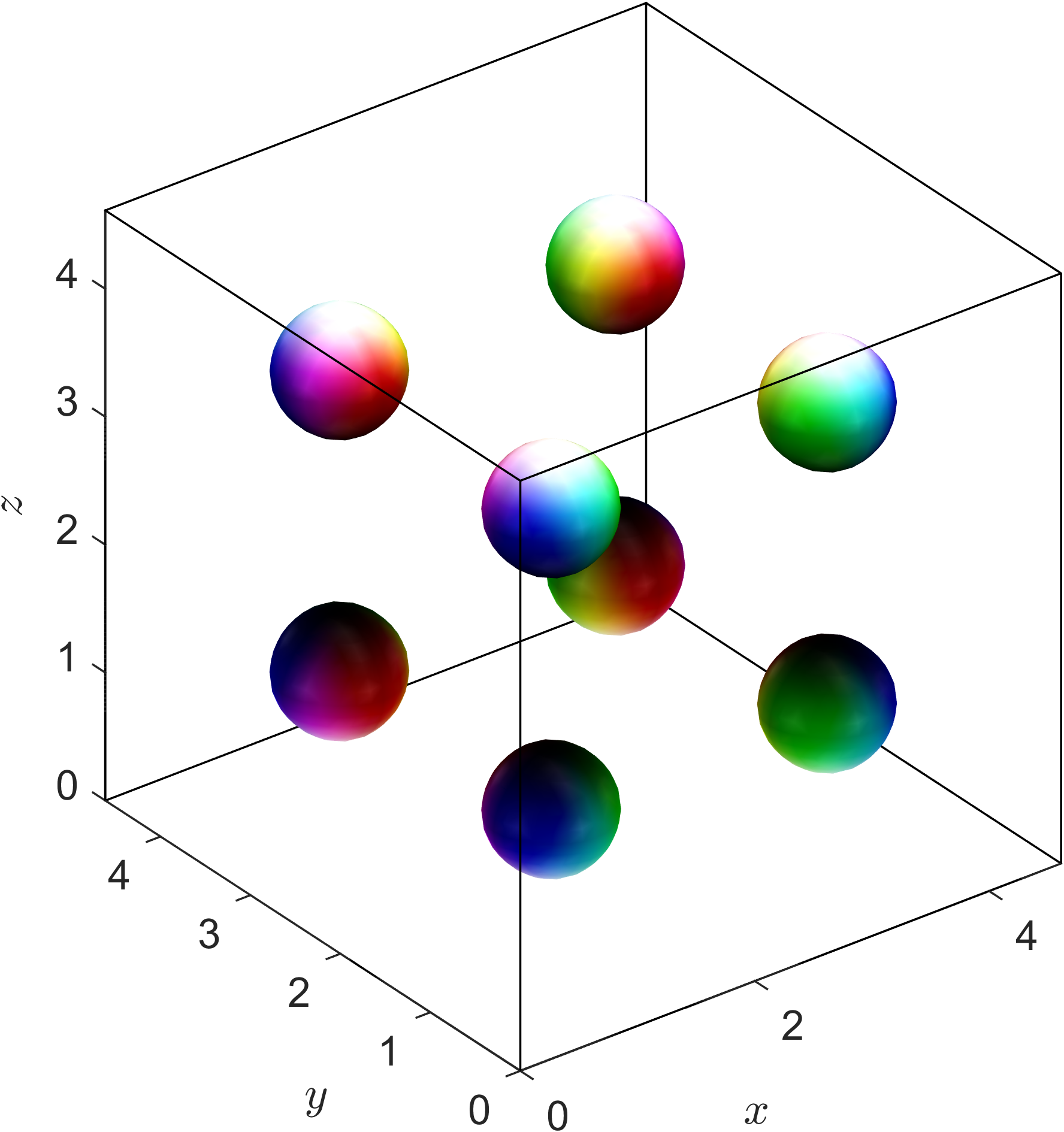}
        \caption{}
        \label{fig: Massless 1/2-crystal BD} 
    \end{subfigure}
    ~
    \begin{subfigure}[b]{0.4\textwidth}
        \centering
        \includegraphics[width=\textwidth]{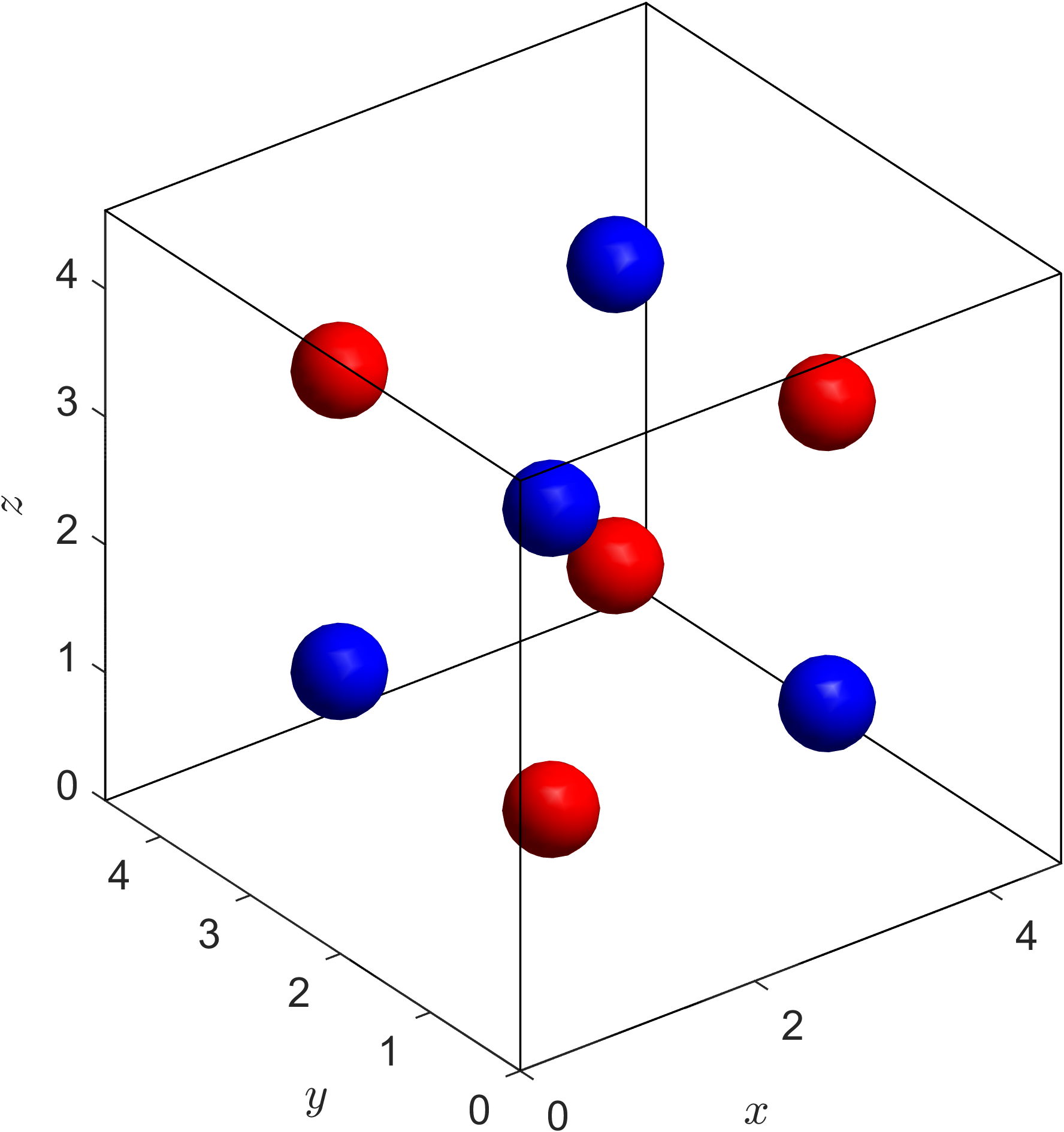}
        \caption{}
        \label{fig: Massless 1/2-crystal SD}
    \end{subfigure}
    \caption{The $1/2$-crystal solution of the massless Skyrme model. The baryon density is depicted in \ref{fig: Massless 1/2-crystal BD} and isosurface plots of the $\phi_0$ field, where $\textcolor{red}{\phi_0=0.9}$ and $\textcolor{blue}{\phi_0=-0.9}$, are shown in \ref{fig: Massless 1/2-crystal SD}.}
    \label{fig: 1/2-crystal}
\end{figure}

It is important to note that $E_{(0)}=E_2+E_4$ is invariant under the natural action of $\SO(4)$ on the target three-sphere, that is, for all $(\phi,g)\in\MM$ and all $R\in \SO(4)$,
$E_{(0)}(R\phi,g)=E_{(0)}(\phi,g)$. Hence the solution described above is just one
critical point of $E_{(0)}$ lying in a $6$-dimensional family of critical points, its orbit under $\SO(4)$. If we now switch on the pion mass, that is, consider $E_{(t)}$
for small $t>0$, we may ask which (if any) of these critical points survive the perturbation. It is useful to switch perspective slightly: rather than fixing the perturbation and considering what happens to all points in the $\SO(4)$ orbit of the $1/2$-crystal, it is convenient to fix the field and metric to be the $1/2$-crystal, and consider what happens to this fixed configuration under the $\SO(4)$ orbit of the perturbation. That is, we ask for which $p\in S^3$, if any, does the $1/2$-crystal $(\phi_{1/2},g_{1/2})$ lie in a curve 
$(\phi(t),g(t))$ of critical points of the $t$-parametrized family of functions
\beq
E_{(t)}^p(\phi,g)=E_2(\phi,g)+E_4(\phi,g)+t\int_{\T^3}(1-p\cdot\phi)\vol_g.
\eeq
(We recover the original function $E_{(t)}$ by choosing $p=(1,0,0,0)$.)
To answer this question, we will need to understand the symmetries of the $1/2$-crystal in some detail.  

Denote by $O$ the subgroup of $\SO(3)$ consisting of matrices that map the integer lattice $\Z^3$ to itself. This is a finite group of order $24$, the rotational symmetries of the cube. The manifold $\T^3$ is itself an abelian Lie group whose group operation is translation modulo $\Z^3$. The semidirect product $\Aut(\T^3)=O\ltimes \T^3$
\beq
(A_1,\vvec_1)\bullet(A_2,\vvec_2)=(A_1A_2,A_1\vvec_2+\vvec_1)
\eeq
acts on $\T^3$ by $(A,\vvec):\xvec\mapsto A\xvec+\vvec$. This induces a right action of $\Aut(\T^3)$ on $\MM=C^\infty(\T^3,S^3)\times\SPD_3$ by $S:(\phi,g)\mapsto (\phi\circ S,
S^*g)$ or, more explicitly,
\beq
(A,\vvec)\cdot(\phi(\xvec),g)=(\phi(A(\xvec+\vvec)),A^TgA).
\eeq

Having completed these preliminaries, we observe that the 
energy of the massless Skyrme model $E_{(0)}:\MM\ra\R$ is invariant under the left action of $G=\SO(4)\times \Aut(\T^3)$ on $\MM$,
\beq
(R,S)\cdot(\phi,g)=(R\circ \phi\circ S^{-1}, (S^{-1})^*g).
\eeq
The $1/2$-crystal $(\phi_{1/2},g_{1/2})$ is a critical point (in fact a minimum) of $E_{(0)}$. Its stabilizer $\Gamma$ (the subgroup of $\SO(4)\times\Aut(\T^3)$ that leaves it fixed) is an order $192$ group generated by
\beq
\begin{array}{ll}
R(\phi)=(\phi_0,\phi_2,\phi_3,\phi_1), &
S(\xvec)=(x_2,x_3,x_1),\\
R(\phi)=(\phi_0,\phi_2,-\phi_1,\phi_3), &
S(\xvec)=(x_2,-x_1,x_3),\\
R(\phi)=(-\phi_0,-\phi_1,\phi_2,\phi_3), &
S(\xvec)=(x_1+\frac12,x_2,x_3).
\end{array}
\eeq
The image of the natural projection $\pi:\Gamma\to SO(4)$ is naturally isomorphic to the octahedral group $O_h$, and the kernel is isomorphic to $\mathbb{Z}_2\times\mathbb{Z}_2$.

Once we turn on the perturbation, the symmetry group of the energy function
$E_{(t)}^p$ is broken to $\SO(3)_p\times\Aut(\T^3)$ where
\beq
\SO(3)_p=\{R\in \SO(4): Rp=p\}.
\eeq
Let us define the reduced stabilizer of the $1/2$-crystal to be
\beq
\Gamma_p=\Gamma\cap(\SO(3)_p\times\Aut(\T^3)),
\eeq
and the set of fixed points of $\Gamma_p$ in $\MM$ to be 
\beq
\MM^{\Gamma_p}=\{(\phi,g)\in\MM: \forall q\in\Gamma_p,\: q\cdot(\phi,g)=(\phi,g)\}.
\eeq
Formally, this is a submanifold of $\MM$, and it contains $(\phi_{1/2},g_{1/2})$ for all $p$, by construction. 
 By the Principle of Symmetric criticality, a point
$(\phi,g)\in \MM^{\Gamma_p}$ is a critical point of $E_{(t)}^p$ if (and only if) it is a critical point of its restriction $E_{(t)}^p|:\MM^{\Gamma_p}\ra\R$.  
For generic choices of $p\in S^3$ we expect $\Gamma_p$ to be trivial, so that 
$\MM^{\Gamma_p}=\MM$, and this observation confers no advantage. The interesting case is when the intersection of $\MM^{\Gamma_p}$ with the $G$ orbit of $(\phi_{1/2},g_{1/2})$ is (locally) just $(\phi_{1/2},g_{1/2})$. Then $(\phi_{1/2},g_{1/2})$ is an {\em isolated} critical point of $E_{(0)}|:\MM^{\Gamma_p}\ra\R$. If, as seems likely, it is also a {\em nondegenerate}
critical point of $E_{(0)}^p|$ (meaning that the Hessian of $E_{(0)}^p|$ at $(\phi_{1/2},g_{1/2})$ is nondegenerate), then the persistence of a critical point for $t>0$ sufficiently small follows from the Inverse Function Theorem
applied to $\d E_{(t)}^p|$. That is, there exists $\eps>0$ and a (unique)
smooth curve $\gamma:(-\eps,\eps)\ra\MM^{\Gamma_p}$ such that $\gamma(0)=(\phi_{1/2},g_{1/2})$ and $\d E_{(t)}^p|_{\gamma(t)}=0$ for all $t\in (-\eps,\eps)$.

To summarize, we expect $(\phi_{1/2},g_{1/2})$ to smoothly deform into a critical point of $E_{(t)}^p$ (as $t$ increases from $0$) if $p$ is chosen so that a neighbourhood of $(\phi_{1/2},g_{1/2})$ in $\MM^{\Gamma_p}$ intersects the $G$ orbit of $(\phi_{1/2},g_{1/2})$ only at $(\phi_{1/2},g_{1/2})$. Let us call this condition the {\em isolation condition}. The next task is to understand this condition on $p$ at an algebraic level. 

Assume that $p$ fails the isolation condition. Then there exists a regular curve 
$q:(-\eps,\eps)\ra G$ with $q(0)=e$ such that, for all $t$, $q(t)\cdot(\phi_{1/2},g_{1/2})\in\MM^{\Gamma_p}$ or, more explicitly, 
for all $Q\in\Gamma_p$, and $t\in(-\eps,\eps)$
\bea
Q\cdot q(t) \cdot (\phi_{1/2},g_{1/2})&=&q(t)\cdot(\phi_{1/2},g_{1/2})\nonumber\\
\Rightarrow\quad[q(t)^{-1}Qq(t)]\cdot(\phi_{1/2},g_{1/2})&=&(\phi_{1/2},g_{1/2}).
\eea
Hence, for all $Q\in\Gamma_p$ and $t$, $q(t)^{-1}Qq(t)\in\Gamma$. But $\Gamma$ is discrete (in fact, finite), so for all $t$ and $Q$, 
\bea
q(t)^{-1}Qq(t)&=&q(0)^{-1}Qq(0)=Q\\
\Rightarrow\quad
Qq(t)Q^{-1}&=&q(t).
\eea
The derivative of this equation at $t=0$ implies that there exists
some nonzero $\xi\in\g$ (the Lie algebra of $G$), namely $\xi=\dot{q}(0)$,
such that $Ad_Q\xi=\xi$. Conversely, given a nonzero $\xi\in\g$ such that
$Ad_Q\xi=\xi$ for all $Q\in\Gamma_p$, we can construct a curve $\gamma(t)=\exp(t\xi)$ such that $\gamma(t)\cdot(\phi_{1/2},g_{1/2})$ remains in $\MM^{\Gamma_p}$. Hence, the isolation condition is that, for all $\xi\in\g\less\{0\}$, there exists some $Q\in\Gamma_p$ such that $Ad_Q\xi\neq\xi$. More succinctly: $p$ satisfies the isolation condition if and only if the
adjoint representation of $\Gamma_p$ on $\g$ contains no copies of the trivial representation. 

This reduces the problem to one in the representation theory of subgroups of $O_h$.
Given a subgroup $H$ of $O_h\subset \SO(4)$, we determine whether its action on $\mathbb{R}^4$ contains copies of the trivial representation.
If not, it cannot arise as $\pi(\Gamma_p)$ for any choice of $p$.
If it does, $\pi^{-1}(H)$ is a candidate for $\Gamma_p$ for any $p$ in a one-dimensional invariant subspace of the action.
This produces a short list of candidate subgroups.
For each of these we count copies of the trivial representation in the adjoint representation of $\pi^{-1}(H)$ on $\g$.
If there are none, this is a candidate for $\Gamma_p$ for $p$ satisfying the isolation condition.

The results are summarized in table \ref{table1}. We find 28 points $p$ for which $(\phi_{1/2},g_{1/2})$ is an isolated critical point of $E_{(0)}^p$ in
$\MM^{\Gamma_p}$, falling into 4 distinct classes. One class is $p\in\{(1,0,0,0),(-1,0,0,0)\}$. The other three classes all have $p_0=0$ and hence $(p_1,p_2,p_3)\in S^2\subset\R^3$, pointing along some symmetry line of the unit cube: towards the centre of a face (e.g.\ $p=(0,0,0,1)$), the centre of an edge (e.g.\ $p=(0,1,1,0)/\sqrt{2}$) or a vertex (e.g.\ $p=(0,1,1,1)/\sqrt{3}$).

\begin{table}
\begin{center}
\begin{tabular}{|l|l|l|l|l|l|}\hline
$p$ & $|\Gamma_p|$ & $\pi(\Gamma_p)$ & Description as subgroup of $O_h$ & $g$ & label\\
\hline
$(1,0,0,0)$ & 96 & $O$ & Orientation preserving & $\diag(a,a,a)$ & $1/2$-crystal\\
$(0,0,0,1)$ & 32 & $C_{4v}$ & Maps a face to itself & $\diag(a,a,b)$ & sheet \\
$(0,0,1,1)/\sqrt{2}$ & 16 & $C_{2v}$ & Maps an edge to itself & $\diag(a,b,b)$ & chain\\
$(0,1,1,1)/\sqrt{3}$ & 24 & $C_{3v}$ & Maps a vertex to itself & $ \diag(a,a,a)$ & $\alpha$-crystal\\
\hline
\end{tabular}
\end{center}

\caption{Points $p\in S^3$ for which the $1/2$-crystal is isolated in $\MM^{\Gamma_p}$, and hence is expected to continue to a critical point of the massive Skyrme model. The leftmost column gives one representative point in each class. Subsequent columns record the order of the corresponding stabilizer $\Gamma_p\subset \Gamma$, the image of $\Gamma_p$ in $\pi(\Gamma)=O_h$, its description as a subgroup of the group of symmetries of the cube, the most general metric consistent with the symmetry, and a descriptive label of the corresponding crystal.} \label{table1}
\end{table}

To do numerics, we switch back to the viewpoint of internally rotating the
field $\phi_{1/2}$, rather than the energy functional, that is, we set $p=(1,0,0,0)$ in $E_{(t)}^p$ and start with the configuration
\beq
\phi=Q\phi_{1/2},\qquad g=g_{1/2},
\eeq
where $Q$ is any $\SO(4)$ matrix whose top row is $p$ (the inverse of an $\SO(4)$ matrix mapping $(1,0,0,0)\mapsto p$). We then minimize $E_{(t)}$ using arrested Newton flow for a sequence of pion masses $m_\pi=t$ starting at $t=0$ and ending at $t=1$. As expected each of the 4 types of critical point smoothly continues. Somewhat unexpectedly, they are all, as far as we can determine, local minima of $E_{(t)}$; none are saddle points. We have checked this by
perturbing the solutions with random perturbations breaking all symmetries, finding that they always relax back to the solutions presented. 

\begin{figure}[t]
    \centering
    \begin{subfigure}[b]{0.23\textwidth}
        \centering
        \includegraphics[width=\textwidth]{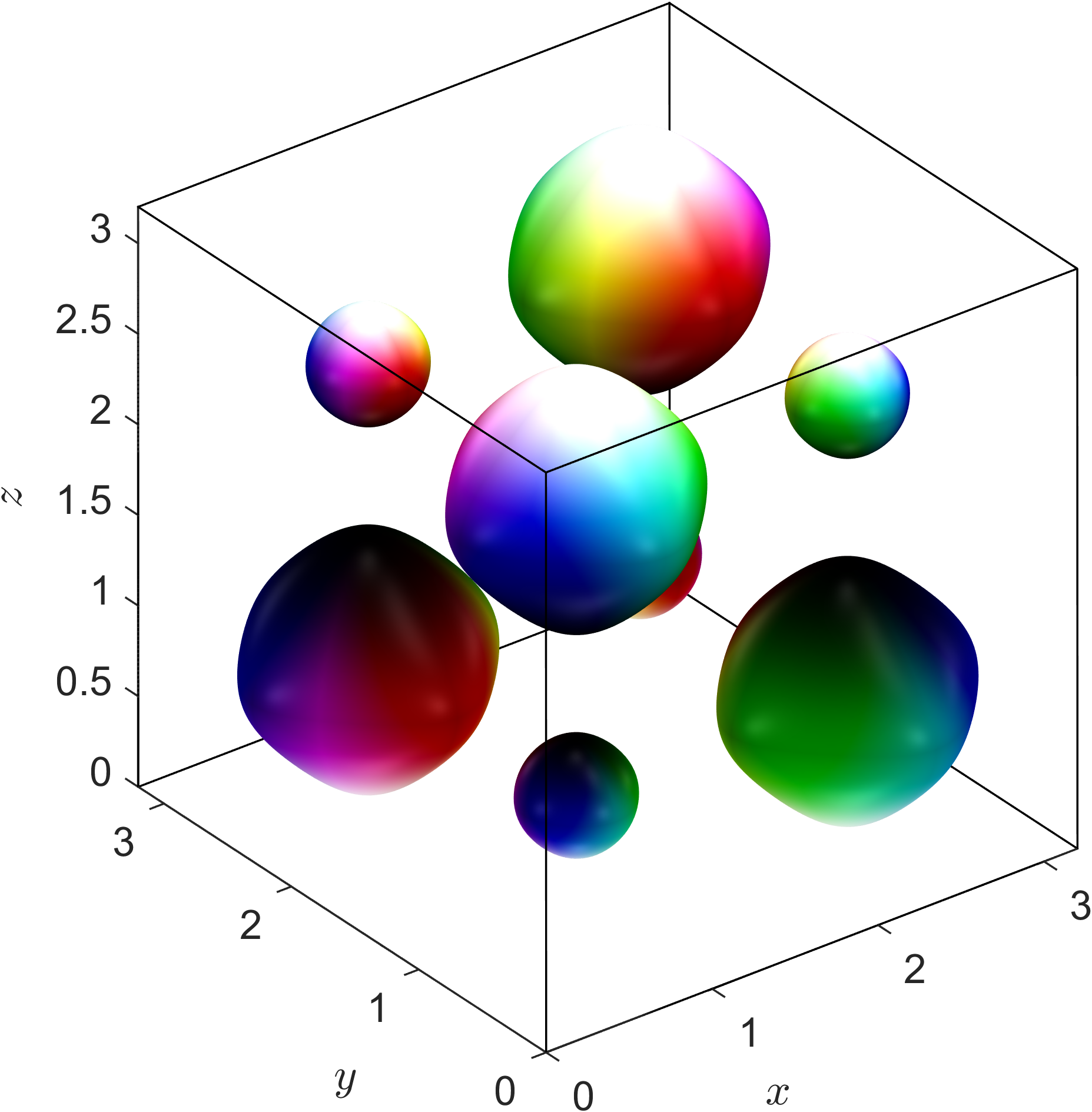}
    \end{subfigure}
    ~
    \begin{subfigure}[b]{0.23\textwidth}
        \centering
        \includegraphics[width=\textwidth]{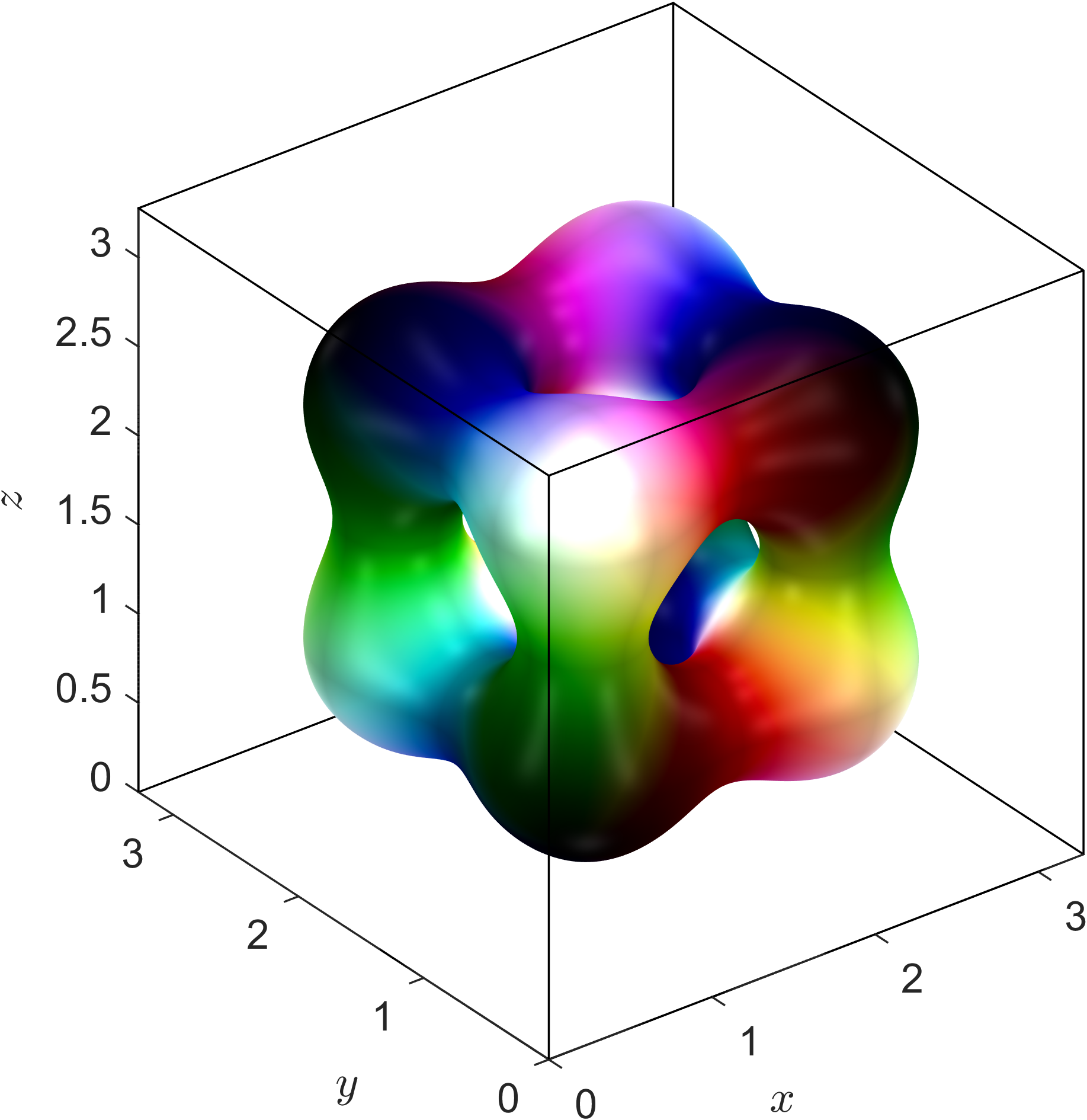}
    \end{subfigure}
    ~
    \begin{subfigure}[b]{0.23\textwidth}
        \centering
        \includegraphics[width=\textwidth]{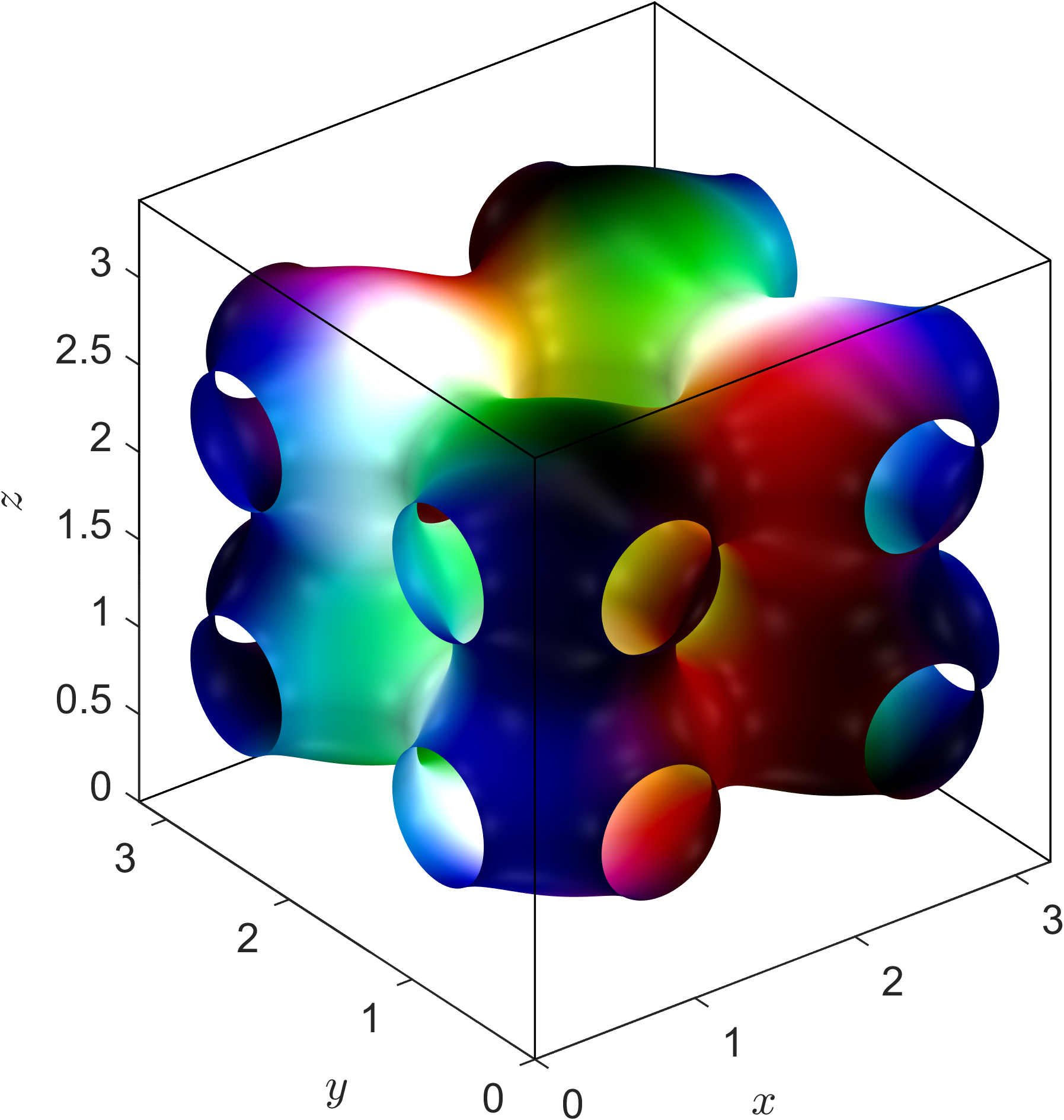}
    \end{subfigure}
    ~
    \begin{subfigure}[b]{0.23\textwidth}
        \centering
        \includegraphics[width=\textwidth]{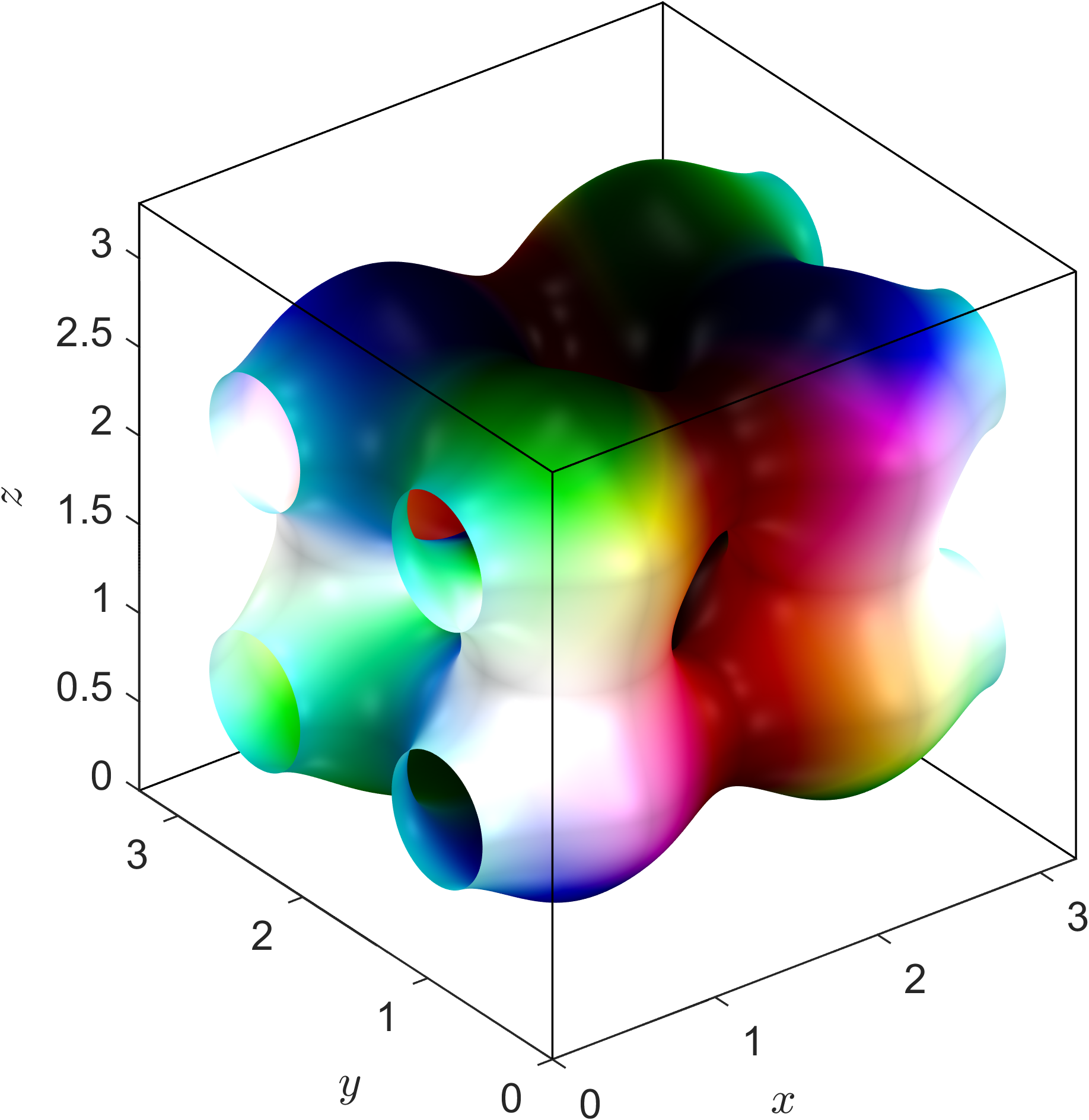}
    \end{subfigure}
    \\
    \begin{subfigure}[b]{0.23\textwidth}
        \centering
        \includegraphics[width=\textwidth]{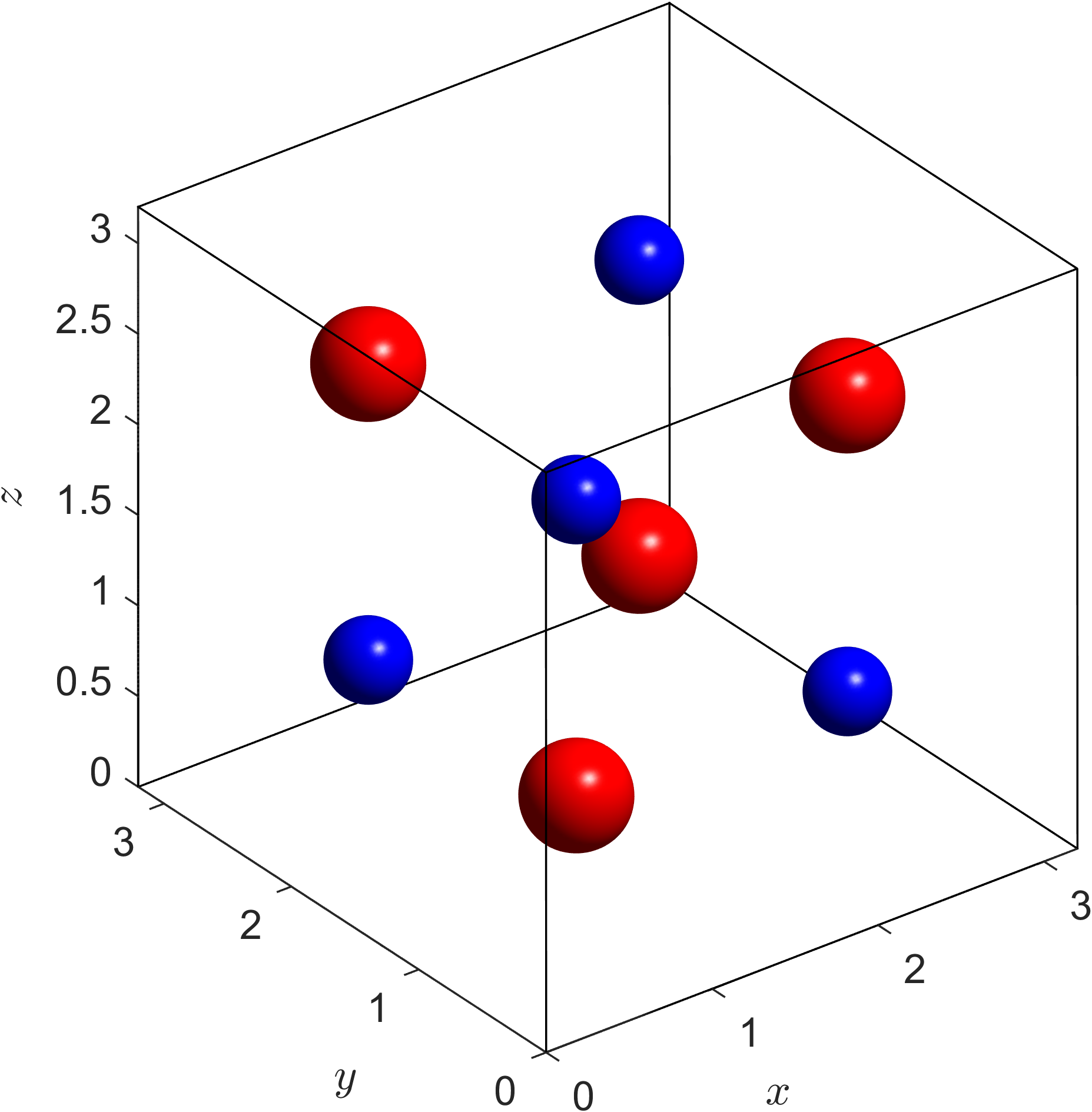}
        \caption{$1/2$-crystal}
    \end{subfigure}
    ~
    \begin{subfigure}[b]{0.23\textwidth}
        \centering
        \includegraphics[width=\textwidth]{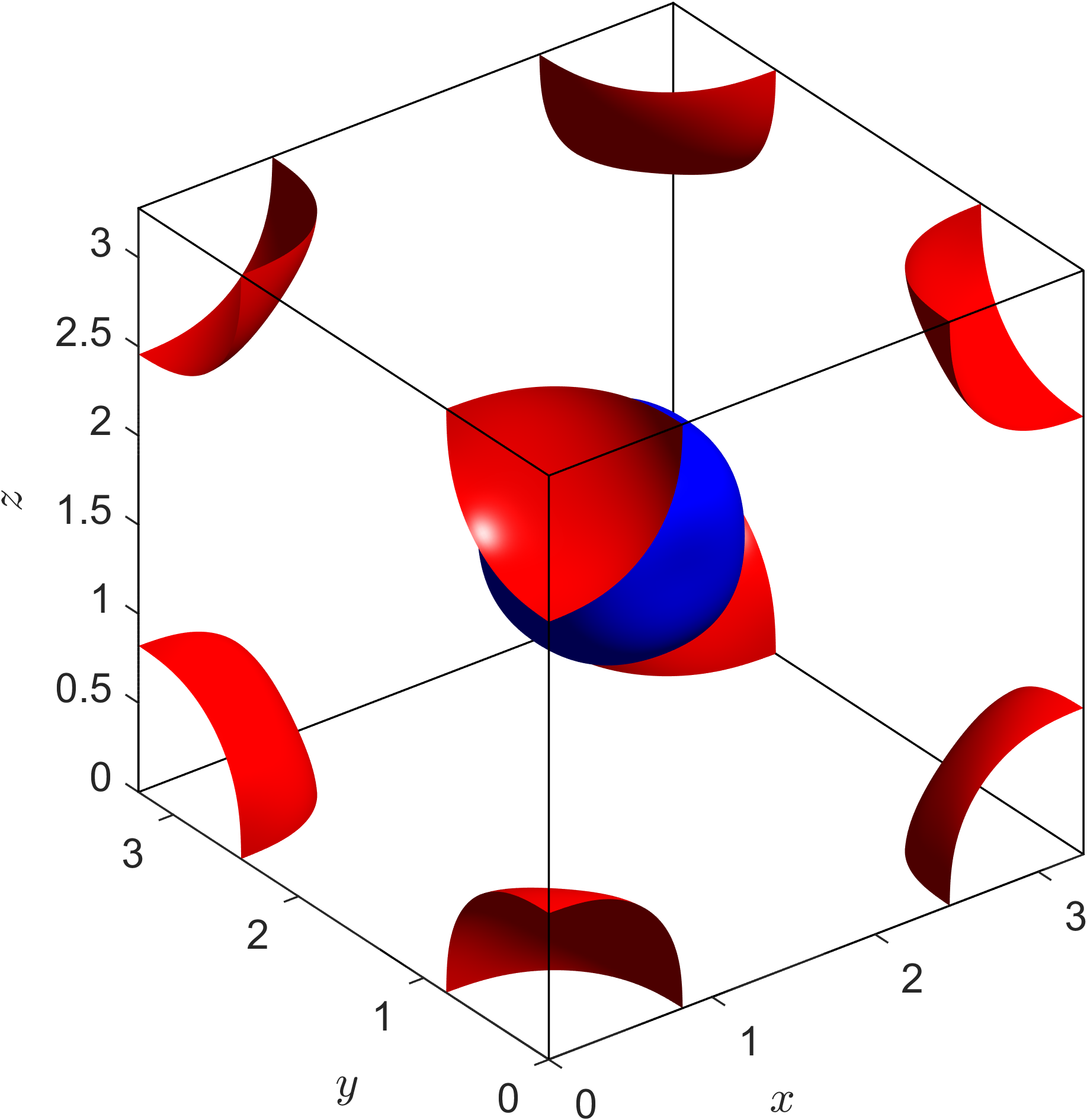}
        \caption{$\alpha$-crystal}
    \end{subfigure}
    ~
    \begin{subfigure}[b]{0.23\textwidth}
        \centering
        \includegraphics[width=\textwidth]{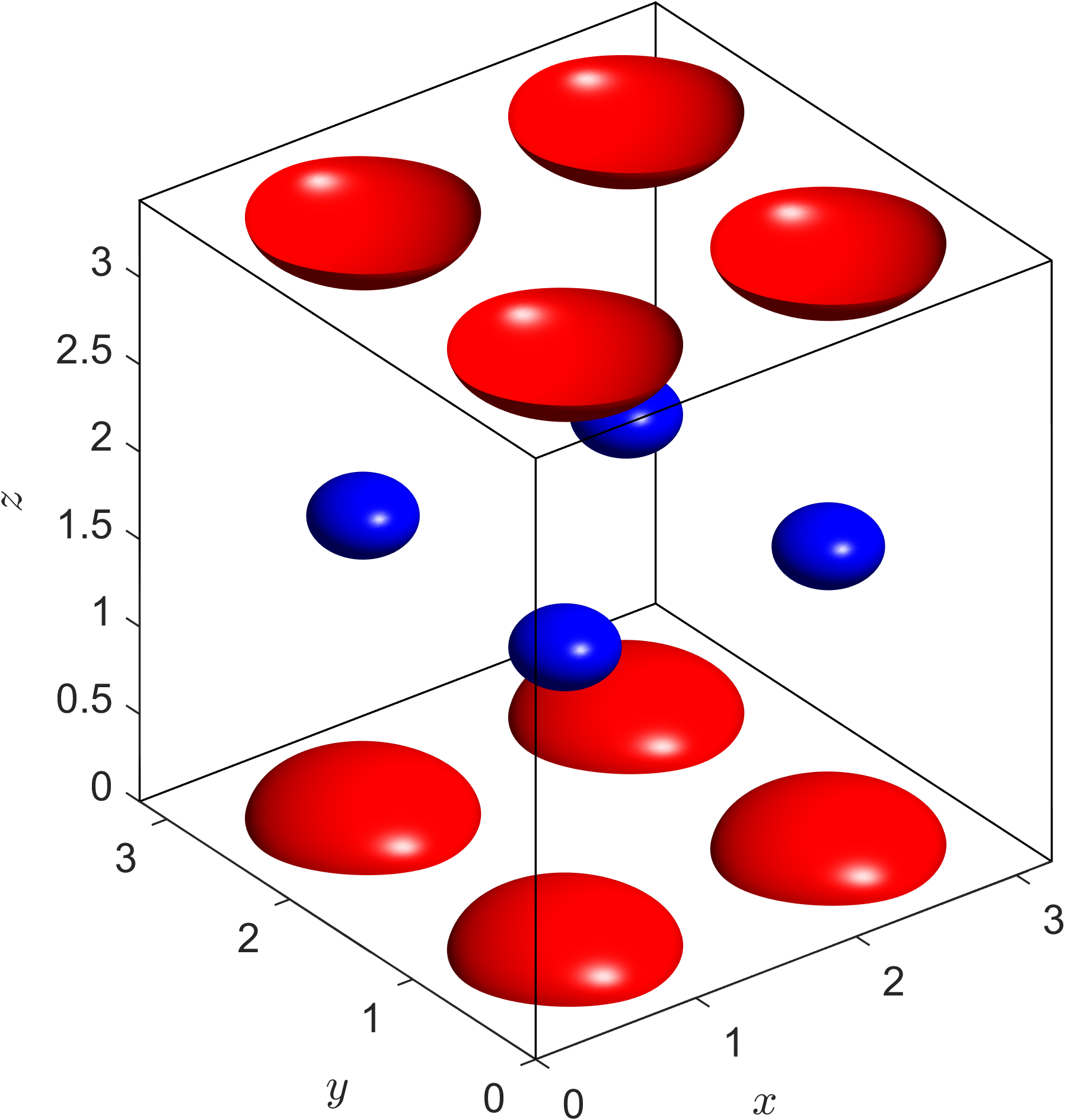}
        \caption{sheet-crystal}
    \end{subfigure}
    ~
    \begin{subfigure}[b]{0.23\textwidth}
        \centering
        \includegraphics[width=\textwidth]{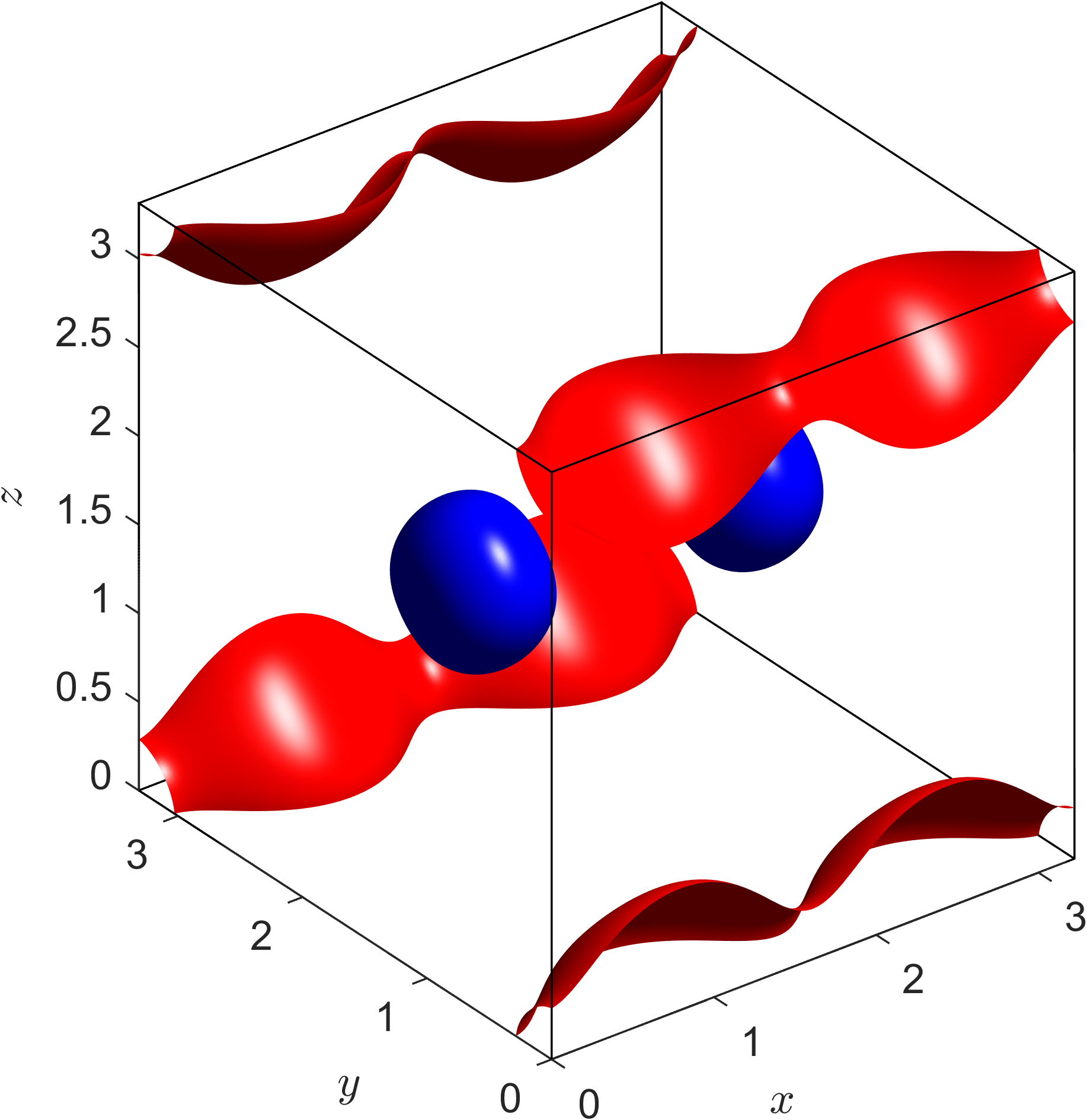}
        \caption{chain-crystal}
    \end{subfigure}
    \caption{Skyrme crystals in the model with pion mass $m_\pi=1$. The top row is the isosurface plots of the baryon density. The bottom row is isosurface plots of the $\phi_0$ field, where $\textcolor{red}{\phi_0=0.9}$ and $\textcolor{blue}{\phi_0=-0.9}$.}
    \label{fig: Massive Skyrme crystals}
\end{figure}

The solutions at pion mass $m_\pi=1$ are depicted in figure \ref{fig: Massive Skyrme crystals}, labelled as in the final column of table \ref{table1}. Ordered by energy, we find sheet $<$ chain $<$ $\alpha$-crystal $<$ $1/2$-crystal, though the chain and $\alpha$-crystals are so close in energy that their order is somewhat uncertain. The energies per baryon per unit cell are 
\begin{equation}
\begin{aligned}
\frac{E_{1/2}}{B}&= 1.2417\times 12\pi^2 = 147.058,\\
\frac{E_\alpha}{B}&= 1.2368 \times 12\pi^2 = 146.479\\
\frac{E_\textrm{chain}}{B}&= 1.2368 \times 12\pi^2 = 146.479\\
\frac{E_\textrm{sheet}}{B}&= 1.2365 \times 12\pi^2 = 146.451.
\end{aligned}
\end{equation}
Neither the sheet-crystal  nor the chain-crystal has an isotropic metric, meaning these crystals do {\em not} have a cubic period lattice. The $\alpha$-crystal and the $1/2$-crystal do have cubic period lattices, as is consistent with our symmetry analysis (see column 5 of table \ref{table1}). The minimal metrics are
\begin{equation}
\begin{aligned}
g_{1/2}&=L^2\I_3,& L&=3.202,  \\
g_{\alpha}&=L^2\I_3,& L&=3.278,\\
g_\textrm{chain}&=\diag(L_1^2,L_2^2,L_2^2),& L_1&=3.221,\quad L_2=3.312,\\
g_\textrm{sheet}&=\diag(L_1^2,L_1^2,L_2^2),& L_1&=3.222,\quad L_2=3.442
\end{aligned}
\end{equation}
from which we deduce that the unit cells for the sheet and chain crystals are trigonal (cuboidal with one pair of periods equal), but with opposite types of distortion: the sheet's unit cell is a stretched cube, the chain's a squashed cube.  Interestingly, the ordering of the volumes of the solutions' unit cells is the reverse of the ordering of their energies, with the sheet-crystal occupying the greatest volume and the 1/2-crystal the least.

Restricting the kinetic energy functional of the model to the isospin orbit of a given static solution we obtain a left invariant metric on $\SO(3)$ called the isospin inertia tensor, which is of some significance in the method of
rigid body quantization \cite{Baskerville_1996,Wereszczynski_2022}. The kinetic energy associated with the potential energy $E_2+E_4+E_0$ given in \eqref{pion E2}-\eqref{pion E0} is
\begin{equation}
    T(\phi,\dot\phi) = \int_{\T^3}\big[\dot{\phi}\cdot\dot{\phi} + g^{ij}\left\{(\dot\phi\cdot\dot\phi)\,(\cd_i\phi\cdot\cd_j\phi) - (\dot\phi\cdot\cd_i\phi)\,(\dot\phi\cdot\cd_j\phi) \right\}\big]\sqrt{|g|}\vol_0.
\end{equation}
Writing $\dot\phi=X^iJ_i\phi$, with $J_i$ being the basis for $\mathfrak{so}(3)$ given by
\begin{equation}
J_1 = \begin{pmatrix}0&0&0&0\\0&0&0&0\\0&0&0&-1\\0&0&1&0\end{pmatrix},\quad
J_2 = \begin{pmatrix}0&0&0&0\\0&0&0&1\\0&0&0&0\\0&-1&0&0\end{pmatrix},\quad
J_3 = \begin{pmatrix}0&0&0&0\\0&0&-1&0\\0&1&0&0\\0&0&0&0\end{pmatrix},
\end{equation}
we find that $T(\phi,\dot\phi)=\frac{1}{2}X^iU_{ij}X^j$, where $U$ is the symmetric matrix given by
\begin{multline}
U_{ij} = 2\int_{\T^3}\big[\delta_{ij}\phi_k\phi_k-\phi_i\phi_j + g^{kl}(\delta_{ij}-\phi_i\phi_j)\cd_k\phi_0\cd_l\phi_0 \\
+g^{kl}(\phi_m\phi_m \cd_k\phi_i\cd_l\phi_j +\phi_0\phi_j\partial_k\phi_0\cd_l\phi_i +\phi_0\phi_i\cd_l\phi_0\cd_k\phi_j )\big]\sqrt{|g|}\vol_0
\end{multline}
and repeated indices are summed from 1 to 3.  We find that, except for the $1/2$-crystal, this matrix is {\em not}{ isotropic}:
\begin{equation}
\begin{aligned}
    U^{1/2} =
        \begin{pmatrix}
            165.2 & 0 & 0 \\
            0 & 165.2 & 0 \\
            0 & 0 & 165.2
        \end{pmatrix}, \\
            U^{\alpha} =
        \begin{pmatrix}
            135.5 & 0 & 0 \\
            0 & 135.5 & 0 \\
            0 & 0 & 167.3
        \end{pmatrix} \\
    U^\textrm{chain} =
        \begin{pmatrix}
            135.6 & 0 & 0 \\
            0 & 135.7 & 0 \\
            0 & 0 & 167.2
        \end{pmatrix}, \\
    U^\textrm{sheet} =
        \begin{pmatrix}
            135.8 & 0 & 0 \\
            0 & 135.8 & 0 \\
            0 & 0 & 166.8
        \end{pmatrix}.
\end{aligned}
\end{equation}

As far as we are aware, in addition to the $1/2$-crystal, only the $\alpha$-crystal has been previously determined in the massive Skyrme model \cite{Feist_2012}. Neither of these is the minimal energy crystal.

It is interesting to track the energy as a function of pion mass, see figure \ref{fig: Massive Skyrme crystals - Comparison}. As $m_\pi$ increases, all of the crystals' energies increase relative to that of the one-skyrmion.  This is an indication that classical binding energies will be small (and hence close to experimental values) when $m_\pi$ is large.  Amongst the various crystal solutions, we find that the sheet, chain and $\alpha$-crystals remain close in energy, with stable order, but the gap to the $1/2$-crystal (which always has highest energy) grows  with $m_\pi$.
\begin{figure}[t]
    \centering
    \includegraphics[width=0.7\textwidth]{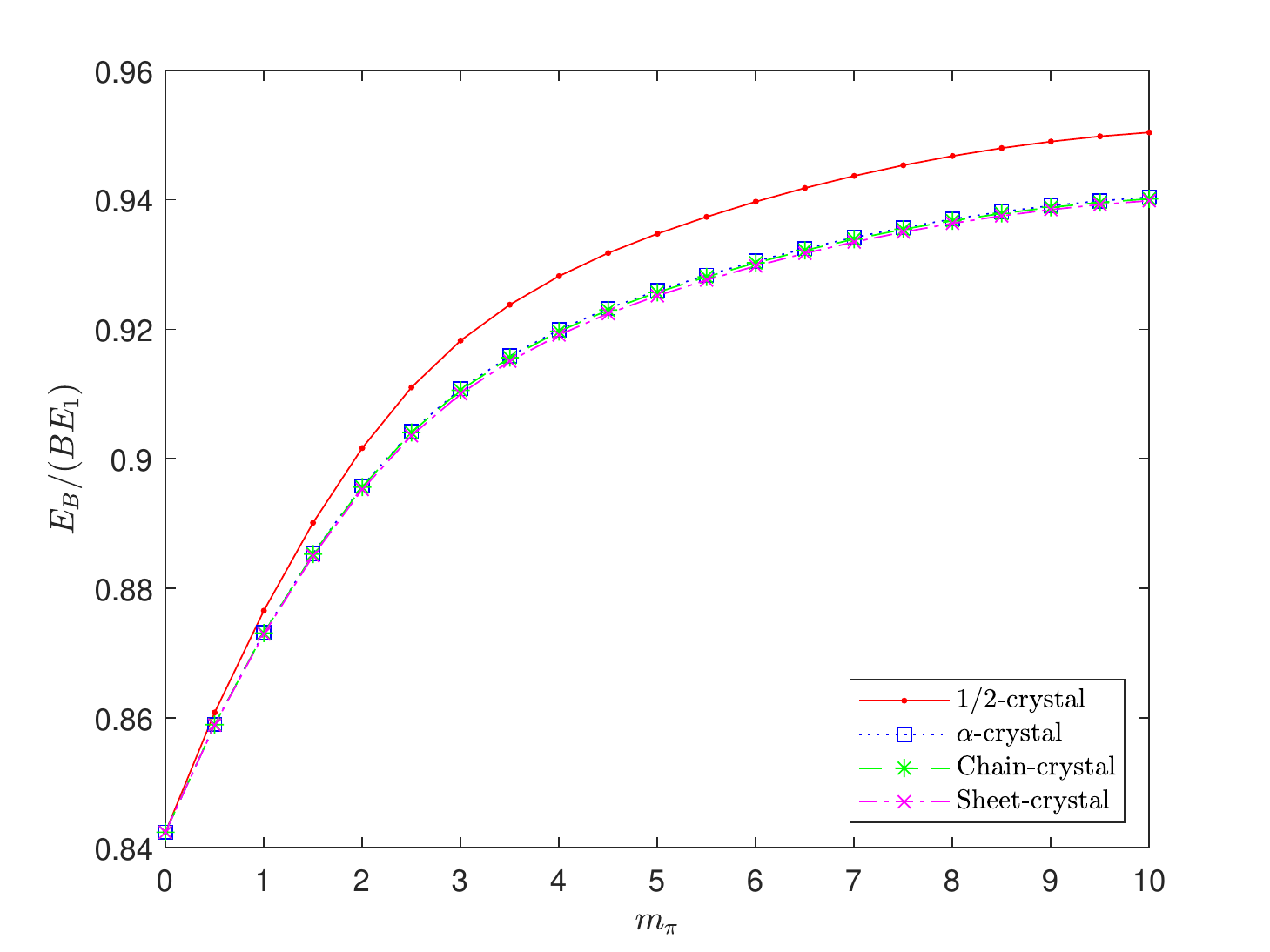}
    \caption{Comparison of the normalized energies per baryon per unit cell of the four Skyrme crystals for increasing pion mass $m_\pi$. Energies are presented in units of the energy of the $B=1$ skyrmion at the relevant pion mass (which grows monotonically with $m_\pi$).}
\label{fig: Massive Skyrme crystals - Comparison}
\end{figure}

%%%%%%%%%%%%%%%%%%%%%%%%%%%%%%%%%%%%%%%%%%%%%%%%%
% VOLUME PRESERVING VARIATIONS
%%%%%%%%%%%%%%%%%%%%%%%%%%%%%%%%%%%%%%%%%%%%%%%%%

\section{Skyrme crystals at prescribed average baryon density}
\label{sec: Fixed baryon density variations}

If we are to use Skyrme crystals as a model of dense nuclear matter (for example, in astrophysical contexts) it is important to understand the properties of the lowest energy configuration among all those with a fixed average baryon {\em density}, treating this density as a parameter of our system.
This problem was first approached by Hen and Karliner \cite{Hen_2008} in the context of the baby Skyrme model.
Therein they extremized the baby Skyrme energy functional with respect to variations of the period lattice at a constant skyrmion density.
This method was carried out at various densities, producing an energy-density curve.
However, they did not address the nature of the critical points they obtained, stating that they could in fact turn out to be maxima or saddle points.
Our method is similar but it is more general and robust.

Let us fix $B$, the baryon number per unit cell. Then the average baryon density of a configuration 
$(\phi,g)$ is 
\beq
\rho_B=\frac{B}{\int_{\T^3}\vol_g}=\frac{B}{\sqrt{\det g}}.
\eeq
Hence, finding the minimal crystal with fixed baryon density (and baryon number $B$ per unit cell) amounts to minimizing $E:\MM\ra\R$ over a level set of $\det g$. Once again, we can address the partial minimization problem where we fix the field $\phi:\T^3\ra SU(2)$ (assumed to be $C^1$ and somewhere immersive) and
a density $\rho_B=B/\nu$ then seek a minimum of $E(\phi,\cdot):\det^{-1}(\nu^2)\ra\R$. It turns out that, like the unconstrained minimization problem studied in section \ref{sec:eumm}, this problem has a unique global minimum and no other critical points.

\begin{prop} Let $\phi:\T^3\ra\SU(2)$ be a fixed $C^1$ map that is immersive somewhere and $\nu>0$ be a constant. Then the function $\SPD_3\supset\det^{-1}(\nu^2)\ra\R$ mapping each flat metric $g$  on $\T^3$ of volume $\nu$ to the Skyrme energy $E(\phi,g)$ attains a unique global minimum, and has no other critical points.
\end{prop}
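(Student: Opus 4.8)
The plan is to reduce this constrained problem to the unconstrained one already solved in Corollary \ref{cor4}, exploiting the homogeneity structure that $E(g) = \wt E(\sigma(g))$ uncovered there. Recall that $\sigma(g) = g/\sqrt{\det g}$ lands in the hypersurface $\{\Sigma \in \SPD_3 : \det\Sigma = 1\}$, and that on that hypersurface $\wt E$ agrees with $E$. The key observation is that the level set $\det^{-1}(\nu^2) \subset \SPD_3$ is diffeomorphic, via the scaling map $g \mapsto \nu^{-2/3} g$, to the unit-determinant hypersurface $\det^{-1}(1)$; so it suffices to analyse $E$ restricted to $\det^{-1}(1)$, on which $E$ coincides with $\wt E$ restricted to that same hypersurface.

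First I would make the homogeneity explicit: from \eqref{scgi}, under $g \mapsto \lambda g$ with $\lambda>0$, the four terms scale as $\lambda^{1/2}$, $\lambda^{-1/2}$, $\lambda^{3/2}$, $\lambda^{-3/2}$ respectively, so on any fixed determinant level set the restriction of $E$ is, up to a positive constant rescaling of $g$, the restriction of $\wt E$ to $\det^{-1}(1)$. Next I would invoke the two ingredients already in hand. For existence: the argument of Proposition \ref{prop1} applies verbatim with the extra constraint $\lambda_1\lambda_2\lambda_3 = 1$ imposed on the minimizing sequence — the coercivity bound \eqref{thebound} still forces $\lamvec_n$ into a compact subset of $(0,\infty)^3$, and $O(3)$ is compact, so a minimizing sequence on $\det^{-1}(1)$ subconverges to a minimizer. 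For uniqueness and absence of other critical points: $\det^{-1}(1)$ is a totally geodesic submanifold of $(\SPD_3, G)$ — indeed it is the image of $\exp$ restricted to the traceless symmetric matrices $\{\xi \in \Sym_3 : \Tr\xi = 0\}$, and more generally $A\exp(t\xi)A^T$ with $\det A = 1$ and $\Tr\xi = 0$ traces out all geodesics within it — so geodesics of $\det^{-1}(1)$ are geodesics of $\SPD_3$. By Proposition \ref{prop2}, $\wt E$ is strictly convex along every nonconstant geodesic of $\SPD_3$, hence in particular along every nonconstant geodesic of $\det^{-1}(1)$. The Rolle's theorem argument of Corollary \ref{cor4} then applies unchanged: two distinct critical points on $\det^{-1}(1)$ would be joined by a geodesic of $\det^{-1}(1)$ along which $(\wt E \circ \gamma)'$ vanishes at both endpoints, forcing $(\wt E \circ \gamma)'' = 0$ somewhere, a contradiction. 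Finally I would translate back: the unique minimizer of $\wt E|_{\det^{-1}(1)}$ scales to the unique minimizer of $E|_{\det^{-1}(\nu^2)}$, and the bijection between critical points is preserved under scaling because $E(\lambda g)$ and $E(g)$ differ by terms with fixed $\lambda$-dependence on the level set, so a critical point of $E|_{\det^{-1}(\nu^2)}$ corresponds to a critical point of $\wt E|_{\det^{-1}(1)}$.

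The main obstacle I anticipate is verifying carefully that $\det^{-1}(1)$ (equivalently $\det^{-1}(\nu^2)$) is totally geodesic in $(\SPD_3, G)$, or at least that every geodesic of the submanifold extends to a geodesic of the ambient manifold — this is what licenses the direct transfer of the strict-convexity statement of Proposition \ref{prop2}. The cleanest route is to use the $GL(3,\R)$-equivariance of $G$ and the explicit form \eqref{ag} of geodesics: a curve $t \mapsto A\exp(t\xi)A^T$ lies in $\det^{-1}(1)$ for all $t$ iff $(\det A)^2 e^{t\Tr\xi} \equiv 1$, i.e.\ iff $\det A = \pm 1$ and $\Tr\xi = 0$; conversely every such curve is a genuine ambient geodesic. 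This simultaneously shows $\det^{-1}(1)$ is totally geodesic and that its geodesics are exactly those listed, closing the gap. A small secondary point worth a sentence is that $\det^{-1}(1)$ is connected (it is the $GL^+(3,\R)$-orbit of $\I_3$ under $A \cdot \Sigma = A\Sigma A^T$ with the stabilizer $SO(3)$, or simply the exponential image of the connected vector space of traceless symmetric matrices together with conjugation), so that "two distinct critical points" can indeed be joined by a geodesic within the submanifold.
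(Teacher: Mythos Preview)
Your approach is essentially the paper's: existence via the coercivity bound \eqref{thebound} restricted to the constraint surface, and uniqueness via the observation that determinant level sets are totally geodesic in $(\SPD_3,G)$ (because $\det(A\exp(t\xi)A^T)=(\det A)^2e^{t\Tr\xi}$), so Proposition~\ref{prop2} and the Rolle argument of Corollary~\ref{cor4} transfer directly.

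Two minor slips to clean up. First, $\sigma(g)=g/\sqrt{\det g}$ does \emph{not} land in $\det^{-1}(1)$: one computes $\det\sigma(g)=(\det g)^{-1/2}$, so $\sigma$ carries $\det^{-1}(\nu^2)$ diffeomorphically onto $\det^{-1}(\nu^{-1})$, which is exactly the level set the paper works on. Second, your ``translate back'' step is not quite right as stated: pulling $E|_{\det^{-1}(\nu^2)}$ back by the scaling $h\mapsto\nu^{2/3}h$ gives $\nu^{1/3}\Tr(Hh^{-1})+\nu^{-1/3}\Tr(Fh)+\text{const}$ on $\det^{-1}(1)$, which is \emph{not} $\wt E|_{\det^{-1}(1)}$ and does not share its critical points. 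The fix is trivial --- this pulled-back function has exactly the form to which Propositions~\ref{prop1} and~\ref{prop2} apply, with $H,F$ replaced by positive multiples --- but simpler still is to drop the detour through $\det^{-1}(1)$ altogether and argue directly on $\det^{-1}(\nu^{-1})$ (or $\det^{-1}(\nu^2)$), as the paper does, since you have already shown every such level set is totally geodesic.
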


\begin{proof} As before, it is equivalent to prove that the associated function
\beq
\wt{E}:{\det}^{-1}(\nu^{-1})\ra\R,\quad \wt{E}=E\circ\sigma^{-1}
\eeq
attains a unique global minimum and has no other critical points, where $\sigma:\SPD_3\ra\SPD_3$ is the diffeomorphism $\sigma(g)=g/\sqrt{\det g}$. 
Now
\beq
\wt{E}(\Sigma)=\Tr(H\Sigma^{-1})+\Tr(F\Sigma)+C_0\nu+C_6\nu^{-1}
\eeq
where $H,F\in\SPD_3$ and $C_0,C_6\in[0,\infty)$ are the $\phi$-dependent constants previously defined.  Existence of a global minimum of $\wt{E}$ follows {\it mutatis mutandis} from Proposition \ref{prop1}, since the bound \eqref{thebound} still holds irrespective of the extra constraint
$\lambda_1\lambda_2\lambda_3=\nu^{-1}$ (equivalent to $\det\Sigma=\nu^{-1}$). This confines the minimizing sequence to a compact subset of the hypersurface $\lambda_1\lambda_2\lambda_3=\nu^{-1}$ in $(0,\infty)^3\times O(3)$, whence a convergent subsequence can be extracted, whose limit attains the infimum of $\wt{E}$ by continuity.

It remains to prove uniqueness. Assume towards a contradiction that
$\wt{E}:{\det}^{-1}(\nu^{-1})\ra\R$ has two distinct critical points
$\Sigma_*$, $\Sigma_{**}$. Then there exists a geodesic
\beq
\gamma(t)=A\exp(\xi t)A^T
\eeq
in $(\SPD_3,G)$ with $\gamma(0)=\Sigma_*$ and $\gamma(1)=\Sigma_{**}$. Now
\beq
\det(\gamma(t))=(\det A)^2e^{t\Tr\xi}
\eeq
and $\det(\gamma(0))=\det(\gamma(1))$, so $\Tr\xi=0$ and we conclude that $\det(\gamma(t))$ is constant. Hence the geodesic $\gamma$ remains on the level set ${\det}^{-1}(\nu^{-1})$. Since $\Sigma_*$, $\Sigma_{**}$ are critical points of $\wt{E}:{\det}^{-1}(\nu^{-1})\ra\R$, and $\gamma$ is tangent to the level set for all $t$, $(\wt{E}\circ\gamma)'(0)=0=(\wt{E}\circ\gamma)'(1)$, so by Rolle's Theorem $(\wt{E}\circ\gamma)''$ vanishes somewhere on $(0,1)$, contradicting the convexity of $\wt{E}$ (Proposition \ref{prop2}). Hence no second critical point may exist.
\end{proof}

In the course of the proof above we established that all level sets of $\det$ are connected totally geodesic submanifolds of $(\SPD_3,G)$, and hence the restriction of $\wt{E}$ to any such level set is strictly convex. Note that, in general, the restriction of a convex function to a submanifold may fail to be convex, so total geodesicity of the level sets is crucial here. 

We can again solve the minimization problem for $E:{\det}^{-1}(\nu^2)\ra\R$ numerically by arrested Newton flow, but now we must take care to project the gradient of $E$ tangent to the level set. Given a curve $g(t)$ in $\det^{-1}(\nu^2)$,
\beq
\frac{d\: }{dt}\bigg|_{t=0}\det g(t)=\nu^2\Tr(g(0)^{-1}\dot{g}(0))=0
\eeq
so $\dot{g}(0)$ is orthogonal to $g(0)^{-1}$ with respect to the Euclidean metric $\ip{X,Y}=\Tr(X^TY)$. Hence $T_g\det^{-1}(\nu^2)=\ip{g^{-1}}^\perp$.
Now
\beq
E(g(t))=\nu\Tr(Hg^{-1})+\frac{\Tr(Fg)}{\nu}+C_0\nu
+\frac{C_6}{\nu},
\eeq
and hence
\beq
\d E_g(v)=\frac{1}{\nu}\ip{F-\nu^2 g^{-1}H g^{-1},v}.
\eeq
It follows that, with respect to the metric on $\det^{-1}(\nu^2)$ induced by the Euclidean metric,
\beq
(\grad E)(g)=\frac1\nu\left\{F-\nu^2 g^{-1}H g^{-1}
-\frac{\Tr((F-\nu^2 g^{-1}Hg^{-1})g^{-1})}{\Tr(g^{-1}g^{-1})}g^{-1}\right\}.
\eeq
We solve the Newton flow $\ddot g=-(\grad E)(g)$ numerically, projecting $g$ back onto $\det^{-1}(\nu^2)$ after each time step by radial dilation
($g\mapsto (\nu^2/\det g)^{1/3}g$), arresting if $E(g(t+\delta t))>E(g(t))$, and terminating if the sup norm of $\grad E$ falls below a prescribed tolerance. As for the unconstrained problem, we apply this algorithm after each iteration of the arrested Newton flow for the field $\phi:\T^3\ra S^3$.

\begin{figure}[t]{}
    \centering
    \includegraphics[width=\textwidth]{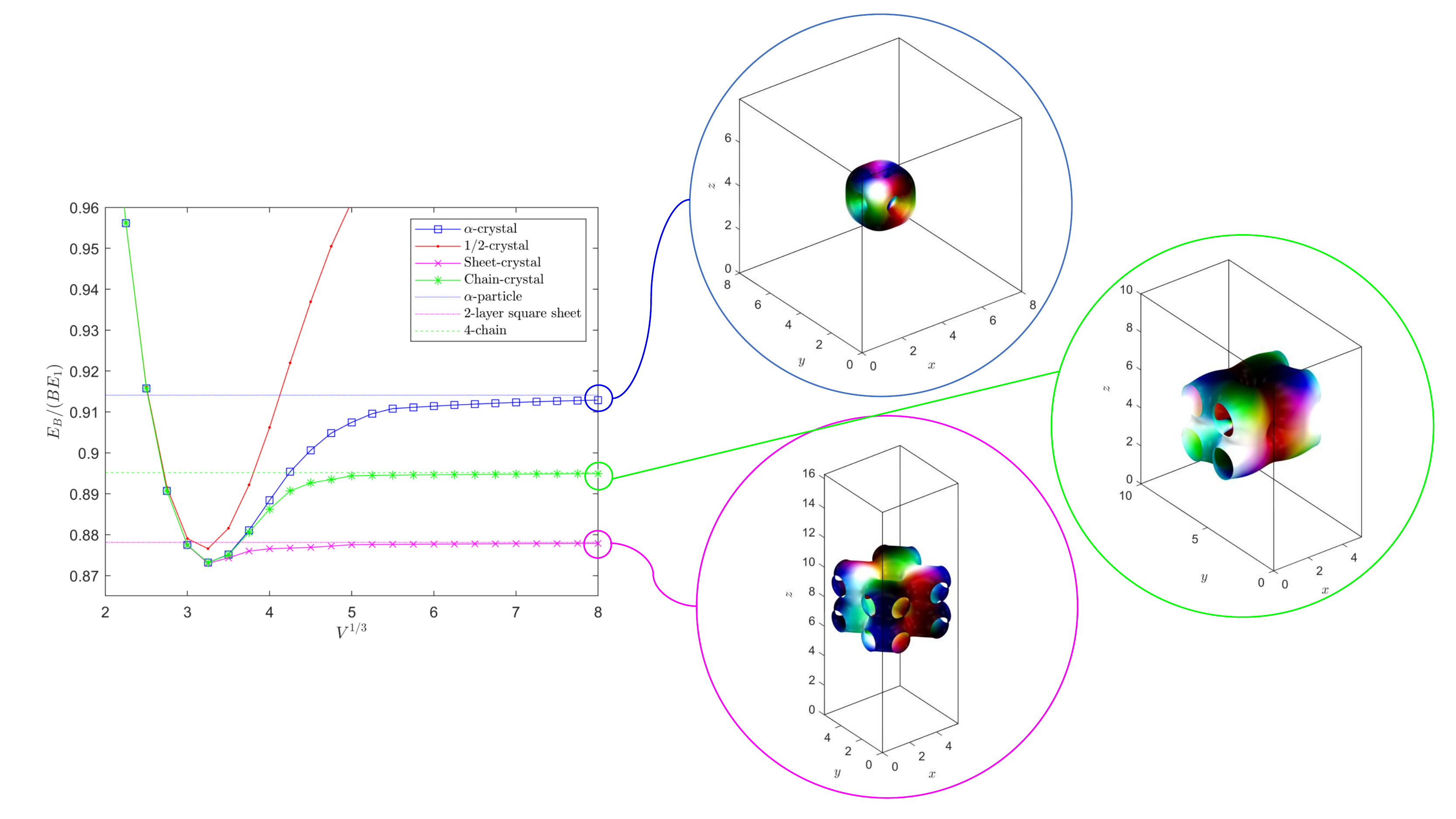}
    \caption{The energy per baryon per unit cell of the Skyrme crystals in the model with pion mass $m_\pi=1$ as a function of cell volume.}
    \label{fig: Skyrme crystals}
\end{figure}

Applying this approach at various densities to the four crystals found in section \ref{sec: Skyrme crystal solutions} at pion mass $m_\pi=1$, we observe that the three lower energy crystals tend to finite-energy solutions at low densities.
The $\alpha$-crystal tends to the $B=4$ $\alpha$-particle skyrmion on $\mathbb{R}^3$ \cite{Braaten_1990}, see figure \ref{fig: Skyrme crystals}.
This phase transition has already been observed by Silva Lobo \cite{SilvaLobo_2010} in the massless model and by Adam \textit{et al.}\ \cite{Adam_2022} in the massive model with sextic term.
The sheet-crystal tends to a double-layered square sheet on $\mathbb{T}^2 \times \mathbb{R}$, similar to the $2$-wall massless solution found by Silva Lobo and Ward \cite{SilvaLobo_2009}.
Finally, the chain-crystal becomes a linear chain on $\mathbb{R}^2 \times S^1$, which appears to be a previously unknown solution.

At low densities the sheet solution is clearly energetically preferred over other solutions.  This qualitative result was predicted earlier in \cite{Park_2019}.  However, there are some important differences between between our result and \cite{Park_2019}.  Our result comes from a minimisation over all Skyrme fields and lattice geometries, whereas \cite{Park_2019} used the more restrictive Atiyah-Manton approximation for the Skyrme field and assumed symmetric lattice geometries.  Second, our minimal-energy Skyrme sheet has a square geometry, whereas those constructed in \cite{Park_2019} had a hexagonal geometry.  Finally, our results are for the model with $m_\pi=1$, whereas \cite{Park_2019} considered $m_\pi=0$.

As one might expect, the four crystals become energetically indistinguishable in the large density limit. As far as we can determine, the curves in figure \ref{fig: Skyrme crystals} never cross, so the crystals maintain their energy ordering at all densities.

\section{Conclusion}
\label{sec:conc}

In this paper, we developed methods to obtain Skyrme crystals in a general class of Skyrme models, and 
presented a detailed numerical study of crystals in the standard Skyrme model with massive pions. To achieve this, we minimized the model's energy with respect to variations of both the field and its period lattice in $\R^3$.
A key idea is to reformulate the latter variation as a variation over all flat metrics on the fixed unit torus $\T^3$. We obtained strong results on the partial minimization problem in which the field is fixed and only the metric varied: under a mild nondegeneracy assumption on the field, there exists a unique flat metric that globally minimizes the Skyrme energy, and no other critical metrics. This result holds also if we constrain the problem to vary only over metrics of fixed volume, a variant relevant to constructing Skyrme crystals of prescribed average baryon density. Our methods impose no symmetry on the period lattice {\it a priori}, and hence go beyond previous studies which imposed a cubic unit cell. 

We find that the minimal energy crystal (with baryon number $4$ per unit cell) has trigonal but not cubic periodicity. At low densities it tends to a double sheet solution. The next lowest energy crystal is also trigonal and not cubic, tending to a chain solution at low densities. Both these crystals are new. Above them in energy are two already known solutions, the $\alpha$-crystal and the $1/2$-crystal. All these crystals, except the most energetic, the $1/2$-crystal, have anisotropic isospin inertia tensors. The existence of four distinct crystals can be understood semi-analytically by means of the Principle of Symmetric Criticality and the Inverse Function Theorem.

The methods detailed in this paper could be applied to the study of isospin asymmetric nuclear matter within the Skyrme model.
The next step would be to investigate neutron crystals by considering the quantum corrections to the energy due to the quantization of the isospin degrees of freedom, and improve on the work done on the massless model by Baskerville \cite{Baskerville_1996}.
In particular, one could determine ``nuclear pasta'' phases in neutron stars  \cite{Ravenhall_1983} by considering the quantization of generalized Skyrme crystals in the low density regime.
The chain-crystal we have found could correspond to the so-called ``spaghetti'' phase, and the sheet-crystal the ``nuclear lasagne''.

%%%%%%%%%%%%%%%%%%%%%%%%%%%%%%%%%%%%%%%%%%%%%%%%%
% ACKNOWLEDGEMENTS
%%%%%%%%%%%%%%%%%%%%%%%%%%%%%%%%%%%%%%%%%%%%%%%%%

\section*{Acknowledgments}

We would like to thank the \href{http://solitonsatwork.net/?}{solitons@work} community for discussions following the Solitons (non)Integrability Geometry X \href{http://th.if.uj.edu.pl/~wereszcz/sig10.html}{(SIG X)} and Geometric Models of Nuclear Matter \href{https://www.kent.ac.uk/smsas/personal/skyrmions/GMNM2022/index.html}{(GMNM)} conferences in June/July 2022, which motivated part of this paper.
P.N.L. is supported by a PhD studentship from UKRI, Grant No. EP/V520081/1.

%%%%%%%%%%%%%%%%%%%%%%%%%%%%%%%%%%%%%%%%%%%%%%%%%
% APPENDIX
%%%%%%%%%%%%%%%%%%%%%%%%%%%%%%%%%%%%%%%%%%%%%%%%%

\appendix

%%%%%%%%%%%%%%%%%%%%%%%%%%%%%%%%%%%%%%%%%%%%%%%%%
% SIGMA MODEL FORMULATION
%%%%%%%%%%%%%%%%%%%%%%%%%%%%%%%%%%%%%%%%%%%%%%%%%

%%%%%%%%%%%%%%%%%%%%%%%%%%%%%%%%%%%%%%%%%%%%%%%%%
% BIBLIOGRAPHY
%%%%%%%%%%%%%%%%%%%%%%%%%%%%%%%%%%%%%%%%%%%%%%%%%

\bibliographystyle{JHEP.bst}
%\bibliography{Skyrme_Crystals.bib}% Produces the bibliography via BibTeX.

%
\bibliography{crystals.bib}

\end{document}